\renewcommand{\arraystretch}{2.5}
\newcommand*\circled[1]{\tikz[baseline=(char.base)]{
		\node[shape=circle,draw,inner sep=2pt] (char) {#1};}}
\newtheorem{theorem}{Theorem}
\newtheorem*{theorem*}{Theorem}
\newtheorem{corollary}{Corollary}
\newtheorem{lemma}{Lemma}
\newtheorem{calculation}{Calculation}
\begin{document}
\title{Efficiencies of logical Bell measurements on CSS codes with static linear optics}
\date{\today}
\author{Frank Schmidt}
\email{fschmi@students.uni-mainz.de}
\affiliation{Institute of Physics, Johannes Gutenberg-Universit\"at Mainz, Staudingerweg 7, 55128 Mainz, Germany}
\author{Peter van Loock}
\email{loock@uni-mainz.de}
\affiliation{Institute of Physics, Johannes Gutenberg-Universit\"at Mainz, Staudingerweg 7, 55128 Mainz, Germany}
\begin{abstract}
	We show how the efficiency of a logical Bell measurement (BM) can be calculated for arbitrary CSS codes with the experimentally important constraint of using only transversal static linear-optical BMs on the physical single-photon qubit level. 
 For this purpose, we utilize the codes' description in terms of stabilizers in order to calculate general efficiencies for the loss-free case, but also for specific cases including photon loss.
	These efficiencies can be, for instance, used for obtaining transmission rates of all-optical quantum repeaters.
	In the loss-free case, we demonstrate that the important class of CSS codes with identical physical-qubit support for the two logical Pauli ($Z$ and $X$) operators can only achieve a logical BM efficiency of $\frac{1}{2}$ if one always employs the same ancilla-free static linear optical BMs on the physical level.
	We apply our methods to various CSS codes including two-dimensional planar color and planar surface codes.
	We then find that in many cases, the fixed use of the standard linear optical BM for all physical BMs is suboptimal and performing linear optical transformations before doing the standard linear optical BM (still without any ancillary photons and without any feedforward) can increase the efficiency enormously.
	In fact, using this generalization in the no-loss (or sometimes also in the low-loss) case allows us, on the one hand, to improve the logical BM efficiency of quantum parity codes compared to previously known results and, on the other hand, it also enables us to enhance the efficiency of two-dimensional planar color codes, whose efficiency is otherwise subject to the above $\frac{1}{2}$ bound, to arbitrarily close to unity.
\end{abstract}
\maketitle

\section{Introduction}

Quantum information theory provides many interesting and even practical applications like quantum computing, quantum (gate) teleportation, super-dense coding or quantum key distribution.
An essential resource is entanglement and a very important tool is a  Bell measurement (BM) which projects two qubits onto a basis of maximally entangled states.\\
Also in experimental realizations of quantum informational tasks there is typically the problem that one wants to protect and isolate qubits, but at the same time needs to address and manipulate them, thus preventing full isolation.
 The qubits then, at least to some extent, interact with their environment and lose their quantum coherence (for an uncontrollable environment).
This problem of decoherence can be tackled by using quantum error correcting codes that encode a logical qubit in multiple physical ones keeping the logical quantum information intact provided that not too many errors occured on the physical qubits, which is similar to methods from classical coding theory.

Quantum key distribution or more generally quantum communication via optical fibers suffers from photon loss for large distances.
 The direct quantum analogs of classical repeaters, which amplify the optical signals that propagate through a fiber channel at intermediate stations, are fundamentally impossible due to the quantum no-cloning theorem.
 Therefore, in the original proposal of a quantum repeater \cite{PhysRevLett.81.5932}, quantum information is not sent directly and instead copies of entangled states are initially distributed.
 These entangled states have to be processed by local operations and two-way classical communication, and also stored in quantum memories.
 Hence, transmission rates are fundamentally limited.
More recently, it was suggested to classify quantum repeaters into three generations \cite{repeatergenerations} and the most recent repeaters of the 3$^\text{rd}$ generation do not have a fundamental limit on the transmission rate anymore \cite{PhysRevLett.112.250501}. 

 The latter repeaters use quantum error correcting codes and perform error corrections after every few km's assisted by only one-way classical communication.
 A common scheme for this error correction is teleportation-based \cite{PhysRevA.71.042322}, where one performs quantum teleportation within the error correcting code, so a BM must be conducted within the code, which we refer to as a logical BM.
 An optical encoding is preferable for communication, since light allows for a high communication speed.
The most efficient and practical toolbox for manipulating quantum optical states is that of linear optics.
 It is well-known that it is impossible to perform a linear optical  BM with a greater efficiency than $\frac{1}{2}$ on optical dual-rail (DR) qubits (encoding a qubit into one photon occupying one of two, typically, polarization modes)  without making use of additional photons and without feedforward \cite{quant-ph/0007058}.
However, some effort \cite{Ewertqpcprl,EwertPRAqpc} has been made to study the benefits of quantum parity codes (QPC) \cite{qpcintro} for increasing the logical BM efficiency and robustness against photon loss.
These works show that it is possible to achieve an arbitrarily high logical BM efficiency for a sufficiently high number of code blocks in the absence of photon loss by using only static linear optics, i.e., without feedforward.
For sufficiently large and appropriately chosen block and qubit per block numbers, high and near-unit BM efficiencies are still obtainable in the presence of photon loss.
 Recently Lee et al. \cite{jeongfeedforward} showed that the efficiency of logical BMs on QPC can be improved further when making use of feedforward. Among the few existing quantum-error-correction-based repeater proposals based on linear optics, all \cite{Azuma2015,Pant2017,jeongfeedforward} except one \cite{Ewertqpcprl,EwertPRAqpc} depend on feedforward operations for the error correction steps. 
In the present paper, our focus remains strictly on a situation with static linear optics which appears to be beneficial in the view of the increase of the local operation times and the loss sensitivity of on-chip optical switching in integrated feedforward-based schemes.

Most of these works \cite{EwertPRAqpc,jeongfeedforward,Ewertqpcprl} only considered quantum parity codes, which are the generalized quantum versions of repetition codes, and it is well known that there exist many classical  and quantum codes which outperform repetition codes when considering ideal measurements.
Therefore we present a formalism that allows us to study the efficiencies of linear optical logical BMs for arbitrary CSS codes.
In previous works, the logical Bell states were decomposed into Bell states on the physical qubit level by making use of the repetition structure of the code.
Limited to the incomplete information available with the linear optical physical BMs, one can then check which measurement pattern can be unambiguously associated to a logical Bell state.
In our work, we make use of the stabilizer formulation of CSS codes in order to determine a linear system of equations for obtaining all possible measurement results in the physical Bell basis.
In order to see if an unambiguous identification of a logical Bell state is possible with linear optics, we simply check whether it is possible to fully cover the support of two logical Pauli operators with the existing information from the physical BMs.

We apply our systematic method to quantum parity codes \cite{qpcintro}, QPC($n,m$) with $n$ blocks and $m$ qubits per block, planar surface and planar color codes. 
It can be seen that the logical BM efficiency is very sensitive to linear optical transformations before the standard BM, still without any addition of ancillary photons or feedforward.
We also calculate the efficiencies in the presence of photon loss, but in this case we need to check for every combination of possible physical BM outcomes and erasure patterns whether it is possible to gain enough information to infer the logical Bell state.
Therefore, all our results including photon loss are based on python counting scripts, and we only obtain results for codes with a small number of physical qubits.
On the other hand, we also present results for more general codes, e.g. arbitrary planar surface codes, that rely on combinatorial arguments under the assumption that no photon loss occurs.
Hence we can see that the linear-optics constraint immensely increases the complexity of the analysis in comparison to the unconstrained BMs.

The paper is structured as follows.
 After introducing the general background of this work in \autoref{sec:background}, we show how a destructive logical BM can be performed by using only BMs on the physical qubit level in a transversal (and parallel) manner in \autoref{sec:logicalbm}.
 We further review linear optical BMs and show how the logical BM efficiency can be calculated for arbitrary CSS codes.
 We also make rigorous statements about the logical BM efficiency for wide classes of CSS codes.
 In \autoref{sec:codes} we apply our methods to two dimensional planar surface and color codes. We demonstrate that one can approach unit efficiency when using a special formation of different kinds of  linear optical BMs where otherwise more canonical formations result in a much lower efficiency no longer attaining unity.
Furthermore, we show how the logical BM efficiency of QPC($n,m$) can be increased compared to existing results \cite{Ewertqpcprl} without photon loss, still only based on static linear optics.
At the end, we compare small codes assuming photon loss and linear-optics constraint, and we find, for instance, that the Steane code is able to outperform direct transmission when allowing for additional unentangled ancillary single photons in the static linear-optics scheme.

\section{\label{sec:background}Background}

\subsection{Stabilizer and CSS codes}

We will give a short introduction to the stabilizer formalism \cite{gottesmanphd} which also includes our notation.
 Further details can be found in Ref. \cite{Nielsen:2011:QCQ:1972505}.
 The main idea of quantum error correcting codes typically is to encode some logical qubits in many physical qubits in order to be able to correct errors by redundancy.
The Pauli group of $N$ qubits is given by
\begin{eqnarray*}
	P_N= \{i^{\alpha}A_1 \otimes A_2 \otimes \dots  \otimes A_N|\alpha \in \{0,1,2,3\}, \\A_j \in \{I,X,Y,Z\}, j \in \{1,\dots,N\} \} \,,
\end{eqnarray*}
where $i$ is the imaginary unit, $I$ is the $2\times2$ identity matrix and $X,Y,Z$ are the $2\times2$ Pauli matrices.
 All elements of this group either commute or anticommute.
 A stabilizer code is defined by an abelian subgroup of the Pauli group.
 The elements of this group are called stabilizers and the code space of a stabilizer code is defined as the +1 eigenspace of all stabilizers.
  It is not necessary to take all stabilizer operators into account, since there is a subset of stabilizer generators that give every element of the stabilizer group by multiplication.
 In a quantum error correction scheme, these stabilizer generators will be measured, so that an error syndrome is obtained.
 This syndrome is used for estimating the most probable error and applying a suitable error correction.
 If the product of the actual error and the error correction operators is an element of the stabilizer group, the correction succeeds and if it commutes with all stabilizers but is not a stabilizer itself, the correction fails, because the logical information is changed.
A stabilizer code$[N,k,d]$ of code distance $d$ encoding $k$ logical qubits  into $N$ physical qubits has $2^{N-k}$ stabilizers and $N-k$ stabilizer generators.

A widely used technique to find such codes is the CSS code construction.
 It employs two classical linear codes $C_X$ and $C_Z$ \footnote{note that $C_X$ is called $C_2^\bot$  and $C_Z$ is called $C_1$ in Ref. \cite{Nielsen:2011:QCQ:1972505}} where $C_Z^\bot \subseteq C_X$ or equivalently $C_X^\bot \subseteq C_Z$ and $C^\bot$ refers to the dual code of code $C$.
 These conditions are needed in order to ensure that the stabilizer group is indeed abelian.
 One nice feature of these codes, CSS($C_X$,$C_Z$), is that the stabilizer generators only contain either $I$ and $X$ or $I$ and $Z$.
 This classification of the stabilizer generators means that $X$- and $Z$-errors can be corrected independently.
 The stabilizer generators can be constructed from the check matrices $H_X$ and $H_Z$ of the classical codes in the following way.
 For each row in $H_X$ add a stabilizer generator which contains an $X$-operator on the \textit{j}\textsuperscript{th} qubit if the \textit{j}\textsuperscript{th} entry of  this row is 1 and otherwise it should contain the identity operator on the \textit{j}\textsuperscript{th} qubit.
 The construction for the $Z$-stabilizers is analogous.
 Logical qubits and operators will be denoted by a bar, meaning, for instance, $\overline{X}$ flips the state of the logical qubit from $\ket{\overline{0}}$ to $\ket{\overline{1}}$.
Since the code is defined as a subspace, we are free to choose any basis for the logical qubits. As we are using CSS codes, we choose the basis in such a way that $\overline{X}$-operators are tensor products of $X$ and $I$ operators and $\overline{Z}$-operators are tensor products of $Z$ and $I$ operators.

In this paper, we will work with QPCs, planar surface codes and planar color codes.
A short review of these codes can be found in App. \ref{a:review}.

\subsection{Optical encoding, gates, and linear optics}
In our work, for the physical qubits, we consider DR encoded qubits which are defined by the presence of a single photon in one of two modes with the qubit basis states $\ket{0}$ ($\hat{=} \ket{01}$ in the Fock basis, i.e. $\ket{H}$ for polarization encoding) and $\ket{1}$  ($\hat{=} \ket{10}$ in the Fock basis, i.e. $\ket{V}$ for polarization encoding).
The polarization degree of freedom for the two modes excited by a photon is a very convenient choice of the DR encoding.
For DR qubits, arbitrary single-qubit operations can be performed with a general beam splitter (corresponding to polarization rotators for polarization encoding). 
For more qubits and modes, this corresponds to passive linear optics mixing annihilation operators of different modes linearly, which can be achieved in an experiment by sequentially applying beam splitters and phase shifters.
Annihilation operators $a$ are then transformed via
\begin{equation}
a_i^{\prime}=\sum_{j} U_{ij} a_j \,,
\end{equation}
 where $U_{ij}$ are elements of a unitary matrix $U$. Every matrix $U$ can be implemented via phase shifters and beam splitters \cite{reckdecomposition}. Due to the unitarity of $U$ the total photon number is preserved. However these linear-optical operations can only implement arbitrary unitary transformations acting on the annihilation operators and therefore all unitary transformations acting  on the Hilbert space of one photon in two modes (or even in more modes, but such a multiple-rail encoding will not be considered here), however, not all unitary operations acting on the multi-qubit Hilbert space of multiple photons, each encoded into two modes, can be performed. As a consequence it is impossible, for instance, to perform a CNOT gate on DR encoded qubits deterministically using only linear optics. Thus, also the canonical quantum circuit to realize a Bell-state measurement (see below) based on a two-qubit CNOT gate and a single-qubit Hadamard gate is impossible with linear optics. In the schemes considered in this paper, the matrix U is block-diagonal mixing only certain modes pairwise.

We define the four Bell states $\ket{\Phi^\pm},\ket{\Psi^\pm}$ as
\begin{equation}
	\begin{aligned}
		\ket{\Phi^\pm}=\frac{\ket{00}\pm\ket{11}}{\sqrt{2}}\quad &\hat{=} \quad ZZ=+1 \,,  &XX=\pm1 \,, \\
		\ket{\Psi^\pm}=\frac{\ket{10}\pm\ket{01}}{\sqrt{2}}\quad &\hat{=} \quad ZZ=-1 \,, &XX=\pm1 \,, 
	\end{aligned}
\end{equation} 
where the right side shows the correspondence to their eigenvalues of Pauli operators.
 A measurement on this Bell-state basis using static linear optics (i.e., without conditional linear-optical dynamics depending on measurement outcomes on a subset of modes) and without ancilla photons has an efficiency of at most $\frac{1}{2}$ \cite{quant-ph/0007058}.
It is then possible, for instance, to identify $\ket{\Psi^\pm}$ unambiguously by coupling both qubits with a 50/50 beam splitter.
In the context of quantum error correction, it was shown that it is impossible to deterministically transform a subspace of fixed photon number that is unable to correct photon loss into a code that can correct photon loss by applying only linear optics \cite{linearopticsconstrain}.

\subsection{Photon loss channel}

The effect of photon loss can be modeled by combining the mode of interest with an environmental mode initially in the vacuum state at a beam splitter.
 The idealization of using the vacuum state is reasonable, because $k_B T\ll \hbar \omega$ for frequencies in the optical domain at room temperature.
 The corresponding single-mode amplitude damping channel is given by 
\begin{equation}
\rho \rightarrow \sum_{k=0}^{\infty}\frac{(1-\eta)^k}{k!} \sqrt{\eta}^{a^{\dagger}a} a^k\rho a^{\dagger k}  \sqrt{\eta}^{a^{\dagger}a} .
\end{equation}
This non-unitary evolution can be obtained by tracing over the environmental mode after the beam splitter operation. The parameter $\eta$ represents the transmission probability for a photon, and $a$ is the bosonic annihilation operator of the relevant mode.
When we consider a DR encoded qubit and apply the amplitude damping channel with equal $\eta$ to both modes individually and independently, we obtain an overall channel evolution of
\begin{equation}
	\rho \rightarrow \eta \rho + (1-\eta)\ket{vac}\bra{vac} \,,
\end{equation}
which is also known as the erasure channel.
It maps a qubit state $\rho$ with a probability of (1-$\eta$) onto the two-mode vacuum state $\ket{vac}$ orthogonal to the qubit space (corresponding to an erasure) and otherwise the state is unchanged.
 An erasure pattern is undecodeable in the erasure channel iff  erasures fully cover at least one manifestation of the logical operators of the code \cite{1205.7036,7541599}.
Since $\overline{X}$ and $\overline{Z}$ anticommute, they must share at least one common qubit in every manifestation. 
If, for example, a manifestation of $\overline{X}$ is fully covered by erasures, then there is for each manifestation of $\overline{Z}$ at least one erased qubit which, however, would be needed in order to measure $\overline{Z}$. 

\subsection{Long-distance quantum communication}

Photon loss is a serious problem for long-range quantum communication, because the transmission probability $\eta$ decreases exponentially with distance, $\eta(L)=e^{-\frac{L}{L_{att}}}$ with an attenuation length $L_{att}$ and typically $L_{att}=22$km in an optical glass fiber.
Meanwhile upper bounds (so-called repeaterless bounds) \cite{PLOB,TGW} \footnote{A brief chronological summary of the different contributions
to the secret-key agreement capacity and the
corresponding bounds can be found in footnote [32] of
Ref. \cite{plobhistory} }
 on the point-to-point secret-key transmission rate were derived for a bosonic loss channel.
Furthermore, these upper bounds scale linearly with $\eta$ for $\eta\ll 1$.
In order to obtain a high quantum communication transmission rate for long distances, several schemes for quantum repeaters have been proposed.
The first proposals relied on quantum memories and entanglement purification requiring two-way classical communication.
Therefore, a fundamental limit on the transmission rate is given by the classical communication time \cite{repeatergenerations} for such quantum repeaters.
More recently, a quantum repeater based on only one-way communication was proposed (so-called 3\textsuperscript{rd} generation quantum repeater) \cite{PhysRevLett.112.250501}.
Since it only needs one-way communication, there is no fundamental limit on the transmission rate besides the local times for preparing and processing the encoded states.
This concept consists of splitting the overall distance $L$ into multiple smaller segments of $L_0$ like in a standard quantum repeater, however, with typically smaller $L_0$.

The steps of the protocol are as follows.
First, the quantum information gets encoded within an error correcting code and then this code is sent through a lossy channel of length $L_0$. Second, the error syndrome of the code is measured and a suitable correction is applied.
These steps are repeated at every station after each transmission over the length $L_0$ until the quantum information arrives at its destination.
 If all logical BMs were successful, the appropriate Pauli correction is applied, otherwise the received state is discarded.
Note that any proper loss code can, in principle, exceed the repeaterless bounds if we assume only channel loss with otherwise perfect measurements and corrections for arbitrary total distances. For a working loss code and correspondingly small repeater spacing $L_0$, the transmission probability of the logical qubit $\eta_{\text{log}}(L_0)$ is higher than for a physical qubit $\eta(L_0)$ and this effect increases exponentially with the number of repeater stations $\mathcal{N}$. Although direct transmission of DR qubits uses only two modes while the repeater uses 2$N$ modes independent of $L$ for a given code on $N$ DR qubits, the overall transmission rate per mode as relevant concerning the bounds of Refs. \cite{PLOB,TGW} can be improved via the repeater,
\begin{equation}
\frac{\eta_\text{log}^\mathcal{N}(L_0)}{2N}>\frac{\eta^\mathcal{N}(L_0)}{2}=\frac{\eta(L)}{2}\,,
\end{equation}
for some $L=\mathcal{N}L_0$, since $\left(\frac{\eta_\text{log}(L_0)}{\eta(L_0)}\right)^\mathcal{N}>N$ for some $\mathcal{N}$.

However, in general, the error correction steps require transversal, possibly nonlinear gate operations, and non-destructive syndrome measurements.
These experimentally demanding tasks can be simplified by making use of error correction by teleportation\cite{PhysRevA.71.042322,PhysRevLett.112.250501}.
This can be understood as a quantum teleportation protocol within an error correcting code where the error correction is successful if a logical Bell state can be identified correctly.
In this case, the logical qubit is recovered and can be passed on to the next station(see \autoref{fig:qr_scheme}).
Note that the above arguments for encoded transmissions with quantum error correction always beating direct, DR-encoded transmission if applied to error correction by teleportation at each repeater stations depend on the assumption of 100\% efficient BMs.
With standard linear optics this is impossible to achieve.
Therefore, it is not at all clear and highly non-trivial to decide whether and when an encoded transmission with linear-optics-based error correction steps at the repeater stations beats direct transmission of DR qubits.
One of our results will be that we can obtain such a logical BM for arbitrary stabilizer codes by performing BMs transversally on the physical qubit level. As a main result we will propose a method to calculate the logical BM efficiency for arbitrary CSS codes when using only linear-optical physical BMs in a static transversal manner.
Such a repeater for QPC($n,m$) and static linear optics was already proposed in Refs. \cite{Ewertqpcprl,EwertPRAqpc}. Here we generalize these results for arbitrary CSS codes.
\begin{figure}
\includegraphics[width=0.4\textwidth]{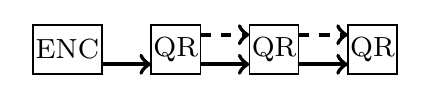}
\caption{A channel of distance $L$ is split into multiple segments of length $L_0=\frac{L}{\mathcal{N}}$ (here $\mathcal{N}=3$). The first station encodes the quantum information within an error correcting code and sends it through 2$N$ parallel bosonic-mode quantum channels (solid arrow) to the next station. A logical Bell state is already prepared in this station and a logical BM is performed on the incoming quantum code and on one half of the logical Bell state in order to teleport the quantum information into the other half of the prepared logical Bell state. This code is then again send to the next station (black solid arrow), while only $2\cdot k$ noiseless classical bits (dashed arrow) are transmitted in order to keep track of the required Pauli corrections due to the teleportation. This procedure is repeated until the last repeater station where an overall Pauli correction is applied and the code gets decoded or is otherwise consumed in some application.}
\label{fig:qr_scheme}
\end{figure}

\section{\label{sec:logicalbm}logical BM with stabilizer codes}
\subsection{Unconstrained physical BMs}
We will start showing that it is possible to decompose a destructive logical BM into multiple destructive physical BMs without the constraint of linear optics.
A destructive measurement is a measurement that gives us some information about the properties of a system, but this system may be destroyed by the measurement. 
Most photon detectors belong to this class of measurements, since they absorb the photons.
However, a destructive logical BM is sufficient for our purposes, because we only need the BM for performing quantum teleportation. 
Therefore, we are only interested in identifying the logical Bell state and do not care if the state is destroyed by the measurement.
\begin{theorem}
	A destructive logical BM on two copies of the same quantum error-correcting code can be performed by transversal physical BMs for every stabilizer code.
\end{theorem}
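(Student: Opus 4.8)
The plan is to exploit the key structural fact that a logical Bell state on two code copies can be built entirely from physical Bell pairs, and then show that transversal physical Bell measurements reconstruct enough commuting logical observables to pin down the logical Bell state. I will work in the stabilizer formalism set up above, using the chosen CSS basis in which $\overline{X}$-operators are tensor products of $X$ and $I$ and $\overline{Z}$-operators are tensor products of $Z$ and $I$.

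\medskip

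First I would characterize what a logical BM must accomplish: it must measure the four logical observables $\overline{Z}\,\overline{Z}$, $\overline{X}\,\overline{X}$ (analogous to the physical $ZZ$, $XX$ in the Bell table above) across the two code copies, together with the stabilizers of both codes, all of which mutually commute. The crucial observation is that when identical codes are placed side by side and a transversal physical BM is performed qubit-by-qubit (pairing the $j$\textsuperscript{th} qubit of copy $1$ with the $j$\textsuperscript{th} qubit of copy $2$), each physical BM returns the eigenvalues of $Z_j Z_j'$ and $X_j X_j'$ on that pair. I would then show that the transversal data $\{Z_j Z_j'\}_j$ and $\{X_j X_j'\}_j$ generate, by taking suitable products over the support of a logical operator, both the logical correlators $\overline{Z}\,\overline{Z'}$ and $\overline{X}\,\overline{X'}$ and the products of corresponding stabilizer generators $S S'$ of the two copies.

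\medskip

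The key algebraic step is that because each stabilizer generator and each logical operator in the CSS basis is a pure tensor product of $X$'s (resp. $Z$'s) and identities, an operator supported on qubits in a set $R$ equals $\prod_{j\in R}(\text{single-qubit Pauli})_j$. Hence its bilateral version $O\otimes O'$ factors as $\prod_{j\in R}(P_j P_j')$, and each factor $P_jP_j'$ (with $P\in\{X,Z\}$) is exactly an eigenvalue returned by the transversal physical BM on pair $j$. Multiplying the appropriate physical outcomes therefore yields the eigenvalue of $\overline{Z}\,\overline{Z'}$, of $\overline{X}\,\overline{X'}$, and of every $S_iS_i'$. These recovered eigenvalues are precisely the commuting set that labels the logical Bell basis, so the logical Bell state is unambiguously identified.

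\medskip

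The main obstacle I anticipate is not the algebra but the bookkeeping of commutation and consistency: one must verify that the full set of observables being inferred from the transversal data actually commute (so they can be simultaneously measured), and that the measured physical outcomes are consistent with a single logical Bell state rather than overdetermining it. I would address this by noting that $\overline{Z}\,\overline{Z'}$ and $\overline{X}\,\overline{X'}$ commute (each logical operator anticommutes with its partner within a copy, but the two anticommutations cancel across the tensor product), and that the stabilizer products $S_iS_i'$ commute with everything since stabilizers commute with logical operators and with each other. The destructive nature of the measurement removes any worry about disturbing a post-measurement code state, so it suffices that the classical outcome data determine the logical Bell label, which the product construction guarantees for \emph{every} stabilizer code, not merely CSS ones.
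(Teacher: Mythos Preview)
Your approach is essentially the same as the paper's: argue that the physical correlators $X_{1,j}X_{2,j}$ and $Z_{1,j}Z_{2,j}$ commute with one another and with the bilateral logical operators $\overline{X}_1\overline{X}_2$, $\overline{Z}_1\overline{Z}_2$, and then recover the logical eigenvalues by multiplying physical outcomes over the support of a chosen logical representative. The commutation bookkeeping you outline is exactly what the paper does, and for CSS codes your product-over-support reconstruction is identical to theirs.

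There is, however, a genuine gap in your final claim that this works for \emph{every} stabilizer code. Your ``key algebraic step'' explicitly relies on the CSS basis, where logical operators and stabilizers are pure tensor products of $X$'s or pure tensor products of $Z$'s, so that each factor $P_jP_j'$ has $P\in\{X,Z\}$. For a general stabilizer code the logical operators (and stabilizers) may carry $Y$'s on some qubits, and then the bilateral factor on that qubit is $Y_jY_j'$, which is not directly returned by the physical BM. The paper closes this gap with the one-line observation that $Y_{1,o}Y_{2,o}$ equals (up to a known sign) $X_{1,o}X_{2,o}\,Z_{1,o}Z_{2,o}$, so its eigenvalue is determined by the two outcomes the transversal BM does provide. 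Once you insert that remark, your argument matches the paper's and covers the full generality claimed in the theorem.
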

\begin{proof}
	A BM of two qubits is equivalent to a measurement of $XX$ and $ZZ$, since Bell states are unique and simultaneous eigenstates of these two operators.\\
	Let $N$ be the number of physical qubits in the code and $k$ the number of encoded logical
	qubits. 
	A logical BM corresponds to a measurement of the operators $\overline{X}_{1,t}\overline{X}_{2,t}$ and  $\overline{Z}_{1,t}\overline{Z}_{2,t}$ for all $t\in\{1,\dots,k\}$  on two quantum error-correcting codes (indices 1 and 2 indicate the two codes).
	 Now we number the physical qubits of both codes from 1 to $N$  using the same numbering.
	Then we measure $X_{1,o}X_{2,o}$ and $Z_{1,o}Z_{2,o}$ for all $o\in \{1,\dots,N\}$.
	 Notice that all logical operators have in common that whenever a Pauli operator acts on a physical qubit of code 1, there also acts an identical Pauli operator on the corresponding physical qubit in code 2.
	Therefore, all physical $X_{1,o}X_{2,o}$ and $Z_{1,o}Z_{2,o}$ commute with themselves and with the logical operators $\overline{X}_{1,t}\overline{X}_{2,t}$ and  $\overline{Z}_{1,t}\overline{Z}_{2,t}$.
 Thus, all of these operators are diagonalizable simultaneously and  measuring all $X_{1,o}X_{2,o}$ and $Z_{1,o}Z_{2,o}$ gives enough information for discriminating all logical Bell states.
 The measurement outcome of $\overline{X}_{1,t}\overline{X}_{2,t}$ can be obtained by multiplying the measurement results of $X_{1,o}X_{2,o}$, $Z_{1,o}Z_{2,o}$ and $Y_{1,o}Y_{2,o}=X_{1,o}X_{2,o}Z_{1,o}Z_{2,o}$ along the support of $\overline{X}_{1,t}\overline{X}_{2,t}$ (for CSS codes only $X_{1,o}X_{2,o}$ need to be multiplied since $\overline{X}$ has only X-operators in its support).
 The case of $\overline{Z}_{1,t}\overline{Z}_{2,t}$ works analogously.
\end{proof}
\noindent Note that the above method differs from a direct measurement of the logical operators in the way that it may destroy the code while the direct measurement of the logical operators ensures to preserve the code.\newline
Such a transversal BM-based error correction by teleportation is well suited for a linear optical implementation of erasure (loss) correction, since we do not need to perform a quantum non-demolition measurement in order to detect photon loss.
The logical BM also contains the relevant photon number information, as the total number of photons entering the logical Bell-state analyzer will be preserved by linear optics.

It is conceptually rather easy to calculate the transmission
probability, i.e. the probability of a successful logical BM after
applying the loss channel on the physical qubits.
It can be obtained obtained by counting all decodable erasure patterns and weighing them with their probability of occurrence. 

For our purposes, it will be useful to assume that photon loss occurs with a transmission probability of $\eta_1$  on the photons of code 1 and with $\eta_2$ on the photons of code 2.
 It is also useful to allow different transmission probabilities for the two codes, because in the case of a quantum repeater (see \autoref{fig:qr_scheme}), one code is typically sent through an optical fiber, while the other code is prepared and consumed locally at each station.
 Thus,  the second code will have much less losses on average than the first code.
We denote the probability that both photons that are needed for a physical BM are not lost by $\tilde{\eta}=\eta_1 \eta_2$.

As a result, the transmission probability of a logical qubit of an [$N,k,d$]-code after one error correction step is given by
\begin{equation}
\label{eq:trans}
	\eta_{log}=\sum_{j=0}^{N} e_j\tilde{\eta}^{N-j}\left(1-\tilde{\eta}\right)^j \,,
\end{equation}
where $e_j$ is the number of decodable erasure patterns containing $j$ erasures and depending on the code, and where we assume that every physical BM succeeds.

As an example, consider for each code QPC(2,2) (see App. \ref{a:review}), i.e., $N=n\cdot m=4$, $k=1$, so that we have 4-1=3 stabilizer generators, $\{XXXX,ZZII,IIZZ\}$, and the logical operators $\overline{X}=XXII=IIXX$ and $\overline{Z}=ZIZI=ZIIZ=IZIZ=IZZI$, where the qubits can be arranged in two blocks, each containing two qubits (here the first two qubits for block 1 and the second two qubits for block 2). As it can be seen in \autoref{fig:qpc22}, the corresponding qubits in both codes are measured in their physical Bell bases, and there must be at least one intact qubit pair per block in order to measure $\overline{ZZ}$ and one whole block must be intact in order to measure $\overline{XX}$. This condition is basically the decodability condition of a single code in the erasure channel, where no logical Pauli operator should be fully covered by erasures (at least one qubit per block needs to be intact, such that no $\overline{X}$ is fully covered by erasures and a whole block needs to remain intact, such that no $\overline{Z}$ may be fully covered by erasures). 
\begin{figure}
	\subfloat[\label{sfig:qpc22figure}]{
		\includegraphics[width=0.2\textwidth]{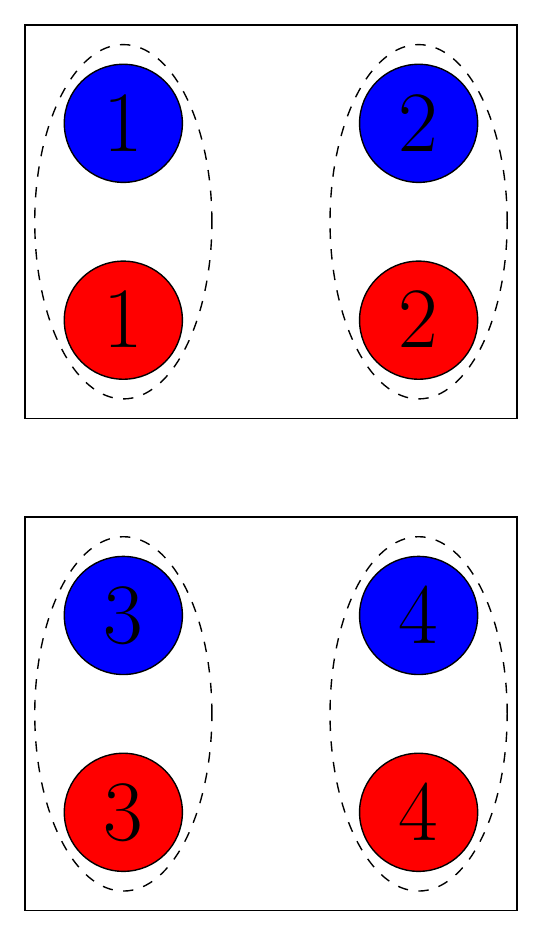}
	}
	\subfloat[\label{sfig:qpc22withloss}]{
		\includegraphics[width=0.2\textwidth]{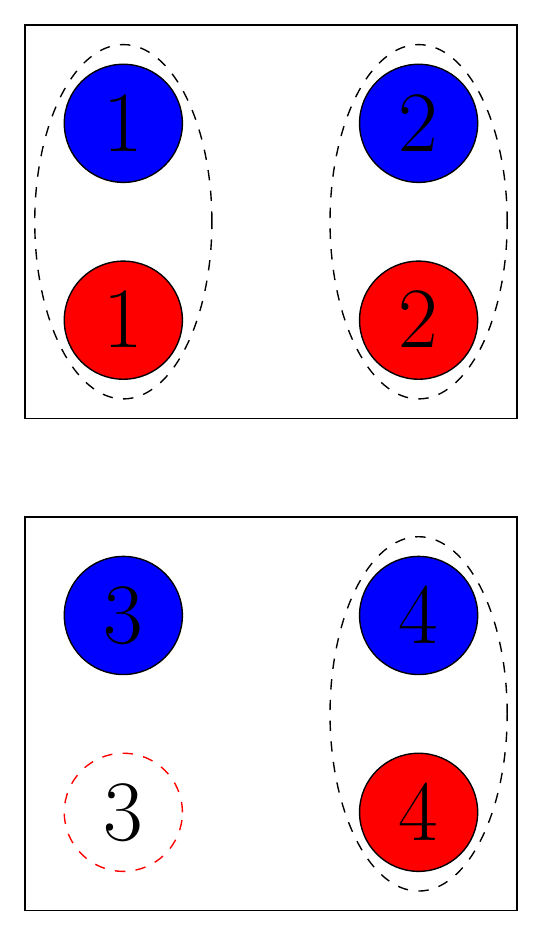}
	}\\
	\subfloat[\label{sfig:qpc22figure_lin}]{
		\includegraphics[width=0.2\textwidth]{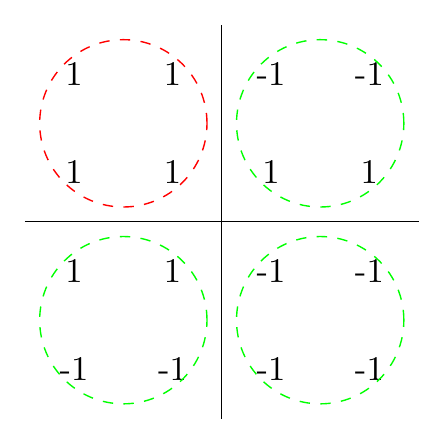}
	}
	\subfloat[\label{sfig:qpc22figurewithloss_lin}]{
		\includegraphics[width=0.2\textwidth]{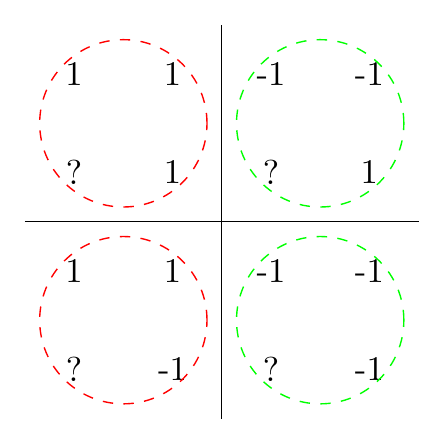}
	}
\caption{Logical BM on QPC(2,2). The blue circles denote qubits of the first copy of QPC(2,2), while the red circles denote qubits of the second copy of QPC(2,2). Physical BMs are denoted by dashed ellipses and are performed on corresponding qubit pairs. The two blocks of each code are indicated by the black boxes. (a) Unconstrained loss-free case. Since all qubit pairs are intact, a logical BM always succeeds using unconstrained physical BMs. (b) Unconstraint lossy case.  Photon 3 of the second code is lost, such that we cannot perform a BM on the third qubit pair. However, qubit pairs 1 and 2 give us information about $\overline{XX}$ and e.g. qubit pairs 1 and 4 still give us information about $\overline{ZZ}$. As a consequence we are still able to perform the logical BM. In (c) and (d) we consider the linear-optics constraint. This means for the physical BMs we always obtain the $ZZ$ measurement result, but we only obtain the XX measurement result if the $ZZ$ measurement yielded -1. We show the possible $ZZ$ measurement results and mark those results where a logical BM is possible in green. (c) No photons are lost and only one of the four possibilities is a failure. (d) Loss of photon 3 in code 2 and now two of the four possibilities are failures.}
\label{fig:qpc22}
\end{figure}

\subsection{Linear optical physical BMs}

Unfortunately, it is impossible to perform physical BMs  deterministically by using linear optics (not even including feedforward and ancilla photons  \cite{quant-ph/9809063}, except for near-deterministic BMs \cite{KLM}).
However, one may consider a probabilistic measurement which sometimes can identify Bell states unambiguously.
 For this kind of measurement it was shown \cite{quant-ph/0007058} that the BM efficiency is bounded by 50\% if one assumes an equiprobable ensemble of Bell states and only allows for static and ancilla-free linear optics. 
There exists a very simple linear optical scheme for such a BM saturating the bound, which is also known as the Innsbruck scheme \cite{innsbruck1,innsbruck2,innsbruck3}  or the standard BM. It allows for identifying two of the four Bell states, thus achieving a 50\% BM efficiency only by using a 50/50 beam splitter.
Since an arbitrary single-qubit operation on DR qubits can be performed by linear optics, it is also possible to identify two arbitrary Bell states by simply performing linear optical operations before the standard BM.
The standard BM (for polarization encoding) employs a phase-free 50:50 beam splitter that mixes the two qubits.
Then the resulting two outputs are each split by a polarization beam splitter, so that horizontal and vertical polarizations get spatially separated.
This is followed by a measurement in the photon number basis and those cases where only one photon per detector is detected lead to a successful identification.

When allowing for additional ancilla photons it is possible to increase the BM efficiency \cite{Grice,EwertAdvancedBM,grosshans_ancilla} and it is even possible to achieve an arbitrarily high efficiency for sufficiently many extra photons.
These measurements can also identify two of the four Bell states deterministically, while the other two states can still be identified with a probability $0\leq p_{\text{adv}}<1$ .

Linear optical BMs can be put into two categories depending on whether their success probability varies with the input Bell state. BMs which can identify all four Bell states with the same probability can be analyzed easily.
In this case failed BMs can be also seen as additional erasures.
Thus, we can use the results of ideal BMs and replace the transmission probability $\tilde{\eta}$  by $\tilde{\eta} \cdot p_{BM}$ where $p_{BM}$ is the success probability of the physical BM.
Recall that for an equiprobable ensemble of the four Bell states a linear optical BM without ancilla photons cannot achieve a higher efficiency than $\frac{1}{2}$ \cite{quant-ph/0007058}. 
Therefore, it is impossible due to the no-cloning theorem that a quantum code gives a higher transmission rate, when using a physical BM whose efficiency is independent of the input state (see App. \ref{a:nocloning}).\newline
An example of a measurement of the other, useful  category is the standard linear optical BM, which identifies the Bell states $\ket{\psi^\pm}$ unambiguously.
It is insightful to write these measurements in terms of eigenstates of Pauli operators.
For example, $\ket{\psi^+}$ is an eigenstate of $ZZ$ with eigenvalue -1 and it is an eigenstate of $XX$ with eigenvalue +1. On the other hand, $\ket{\psi^-}$ is an eigenstate of $ZZ$ with eigenvalue -1 and it is an eigenstate of $XX$ with eigenvalue -1.
The states that cannot be discriminated, $\ket{\Phi^\pm}$, are $ZZ$ eigenstates with eigenvalue +1.
Therefore, we obtain  $ZZ$-information in any case, but we only get further information about $XX$ provided the $ZZ$ measurement gives -1 as the measurement result.
However, this standard linear optical BM is not the only one that can detect two of the four Bell states.
There are $\binom{4}{2}=6$ different measurements possible and these correspond to the $3\cdot 2$ cases where one has three possibilities for the guaranteed information (either $XX$, $YY$ or $ZZ$) and one gets the additional information if the guaranteed result is either +1 or -1.
These five other BMs can be obtained by applying single-qubit transformations (which can be done via linear optics for the DR encoding) before performing the standard BM.
The most general linear optical BM without ancilla photons and without feedforward can identify the four Bell states with probabilities $p_1,\cdots,p_4$ satisfying $\sum_{j=1}^4p_j\leq2$.

We need to decompose the logical Bell states into Bell states on the physical qubit level for calculating the logical BM efficiency when using BMs with state-dependent efficiency.
We will assume for the remainder that all logical Bell states appear with equal probability.
The stabilizer conditions of a CSS code will give us a set of multiplicative equations for possible BM results.\newline
We will use a simple transformation that allows us to solve this set of equations with usual methods of linear algebra over a finite field:
\begin{equation}
	\label{eq:transformation}
	\begin{aligned}
	+1&\longleftrightarrow 0\quad\,,\\
	-1&\longleftrightarrow 1\quad\,,\\
	\cdot &\longleftrightarrow +  \quad.
	\end{aligned}
\end{equation}
Thus, from now on, we refer to a measurement like the standard linear optical BM as a $ZZ$=1 BM, since we always get $ZZ$-information, but we only get the additional $XX$-information if the $ZZ$ measurement yields the eigenvalue $-1$ (if the ''good'' eigenvalue was +1, we would refer to the BM as $ZZ$=0 BM).\newline
The idea of the efficiency calculation is to make a list of all possible BM results and count the number of those results where enough information for an identification of a logical Bell state can be obtained via the linear optical BMs.
 First we assume that no photon losses occur and later we will generalize our formalism to include photon loss.
For a given logical Bell state the possible physical BM results are restricted due to the stabilizer conditions of both codes involved.
We shall address this in the following.\newline
\subsubsection{Loss-free case}
Since we are working with CSS [$N,k,d$] codes, we have stabilizer generators  $g_{X_{h,1}},\dots,g_{X_{h,l'}}$ corresponding to $X$-type stabilizers of the \textit{h}\textsuperscript{th} code ($h\in \{1,2\}$) and $g_{Z_{h,1}},\dots,g_{Z_{h,l''}}$ corresponding to $Z$-type stabilizers of the \textit{h}\textsuperscript{th} code ($l'+l''=N-k$).
Thus, the subspace of the two (identical) quantum codes can be described by the set of independent stabilizer generators, 
\begin{equation*}
	\begin{aligned}
		 \{g_{X_{1,1}},\dots,g_{X_{1,l'}},g_{X_{2,1}},\dots,g_{X_{2,l'}},\\g_{Z_{1,1}},\dots,g_{Z_{1,l''}},g_{Z_{2,1}},\dots g_{Z_{2,l''}}\}\quad\,,
	\end{aligned}
\end{equation*}
encoding $2k$ logical qubits into 2$N$ physical qubits.
However, note that for every stabilizer generator there exists an $XX$ or $ZZ$ operator that anticommutes with the stabilizer generator.
Thus, it is more useful to use 
\begin{equation*}
\begin{aligned}
\{g_{X_{1,1}},\dots,g_{X_{1,l'}},g_{X_{1,1}}g_{X_{2,1}},\dots,g_{X_{1,l'}}g_{X_{2,l'}},\\g_{Z_{1,1}},\dots,g_{Z_{1,l''}},g_{Z_{1,1}}g_{Z_{2,1}},\dots,g_{Z_{1,l''}} g_{Z_{2,l''}}\}
\end{aligned}
\end{equation*}
as a set of independent stabilizer generators, because half of these stabilizer generators commute with all $XX$ and $ZZ$ operators and hence their eigenvalues are preserved after a BM on the qubit level. 
When considering the measurement of $XX_j$  (abbreviation of $X_{1,j}X_{2,j}$ ) for $j\in \{1,\dots,N\}$ there exists a $Z$-type stabilizer generator that anticommutes with $XX_j$ and thus it can yield 0 and 1 as a result with equal probability.
However, a string of all $XX$ measurement results ($XX_1,\dots,XX_N$) cannot take any value in $\mathbb{Z}_2^N$, since the stabilizer conditions of the generators $g_{X_{1,1}}g_{X_{2,1}},\dots,g_{X_{1,l'}}g_{X_{2,l'}}$ need to be fulfilled.
Fulfilling these stabilizer conditions corresponds to being an element of the kernel of $H_X$ when using the above transformation in \cref{eq:transformation}.
Thus, we can obtain a list of all possible $XX$ measurement outcomes by calculating the codewords of the classical linear code $C_X$, while we can obtain a list of all possible $ZZ$ measurement outcomes by calculating the codewords of $C_Z$.\newline
For each stabilizer generator in $\{g_{Z_{1,1}},\dots,g_{Z_{1,l''}}\}$ there exists an $XX$ operator that anticommutes at least with this stabilizer generator. If this $XX$ operator is measured, it gives 0 or 1 as a measurement result with probability $\frac{1}{2}$.
When also considering that $\overline{X}_{1,m}\overline{X}_{2,m}$ can give two different measurement outcomes with equal probability, since we perform measurements on an encoded mixture of equiprobable Bell states, for each $m \in \{1,\dots,k\}$, we see that we obtain $2^{l''+k}=2^{N-l'}$ different equiprobable measurement patterns of $XX$ operators. 
Note that we already showed that every pattern of $XX$ is restricted to be a codeword of $C_X$ and the subspace $C_X$ consists of $2^{N-l'}$ codewords.
Therefore, every codeword of $C_X$ occurs with equal probability. 
Similarly, one can show that also all codewords of $C_Z$ occur with equal probability. \newline
Since we now have a list of all possible physical BM results, it only remains to be checked how many of those give us information about $\overline{XX}$ and $\overline{ZZ}$ when restricted to the information available via linear optical BMs.
In general, we have not found a more efficient method for checking than counting all possible patterns,
though simplifications can be made if one considers using either only BMs with guaranteed $ZZ$-information or only BMs with guaranteed $XX$-information.
 In these cases, it is not necessary to look at all the codewords of $C_Z$ and $C_X$.
Instead one can reduce the number of codewords that has to be checked, because e.g. the success of guaranteed $ZZ$-information measurements is independent of the $C_X$-codewords and hence one only needs to consider the $C_Z$-codewords. \newline
As an example, let us again consider QPC(2,2), but now we do not perform unconstrained physical BMs, but $ZZ$=1 BMs.
Therefore  the efficiency only depends on the $ZZ$ measurement results. The four possible $ZZ$ results can be seen in \autoref{fig:qpc22} (c) and (d). Recall that we need to obtain at least one $ZZ$ result per block and one block where we obtain the $XX$ result for each qubit pair. We obtain $ZZ$-information every time when a physical BM can be performed on two DR rail qubits (none of which were lost), but we only obtain $XX$-information when the $ZZ$ measurement yielded result 1. Therefore, obtaining $\overline{XX}$-information is much more problematic than obtaining $\overline{ZZ}$-information. We can see in \autoref{fig:qpc22} (c) that three of four possible $ZZ$ result combinations allow for a successful logical BM, while we can see in \autoref{fig:qpc22} (d) that only two of four possible $ZZ$ result combinations allow for a successful logical BM when one qubit pair is (partially) erased. In comparison, the logical BM can be even performed deterministically with one erased qubit pair when using unconstrained physical BMs.\newline
It is also straightforward to include enhanced linear optical BMs like those of Refs. \cite{Grice,EwertAdvancedBM,grosshans_ancilla}, which identify two logical Bell states deterministically and  identify the other two with equal probability $p_{adv}$ with the help of ancilla photons.
In order to calculate the logical BM efficiency, we go through the same steps as for the case with standard ancilla-free linear optical BMs, but now we also look at those codewords that previously did not reveal enough information for a successful logical BM. 
We have to take into account that every physical BM that fails when using non-enhanced linear optical BMs has a probability of $p_{adv}$ to be successful.
Thus, we have to create success/failure patterns for all physical BMs that do not succeed deterministically and we have to count those patterns that allow for a successful logical BM with weights corresponding to their probability of occurrence depending on $p_{adv}$.\newline

Stabilizers and logical operators often do have a geometrical interpretation and it is useful to apply this also to the BM outcomes.

\begin{lemma}\label{lem:enum}
Z-codewords can be associated with the support of X stabilizer generators and $\overline{X}$-operators and vice versa (replacing Z and X by X and Z, respectively) for any CSS($C_X$,$C_Z$)-code.
\end{lemma}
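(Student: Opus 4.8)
The plan is to pass from operators to their $\mathbb{F}_2^N$ support vectors under the dictionary of \cref{eq:transformation} and to reduce the whole claim to a single linear-algebra identity, namely that the $\mathbb{F}_2$-span of the supports of the $X$-stabilizer generators together with the $\overline{X}$-operators is exactly the set of $ZZ$-outcome patterns, which we established just above to be $C_Z$:
\[
C_Z=\big\langle \mathrm{supp}(g_{X_{1}}),\dots,\mathrm{supp}(g_{X_{l'}}),\ \mathrm{supp}(\overline{X}_1),\dots,\mathrm{supp}(\overline{X}_k)\big\rangle .
\]
Once this is shown, the reverse statement (exchanging $X$ and $Z$, $C_X$ and $C_Z$) follows verbatim from the symmetry of the CSS construction.

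First I would identify the two families of supports. The $X$-stabilizer generators are by construction the rows of $H_X$, and the rows of a parity-check matrix of $C_X$ span the dual code, so their supports span $C_X^\perp$. For the logical operators I would use that each $\overline{X}$ is an $X$-type operator commuting with every $Z$-stabilizer; since commutation of an $X$-type support $a$ with a $Z$-type support $z$ is equivalent to $a\cdot z=0\pmod 2$, and the $Z$-stabilizer supports span $C_Z^\perp$, the support of any $\overline{X}$ must lie in $(C_Z^\perp)^\perp=C_Z$.

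Second I would establish the two inclusions. For ``$\subseteq C_Z$'': the CSS condition $C_X^\perp\subseteq C_Z$ places the $X$-stabilizer supports inside $C_Z$, and the previous step does the same for the $\overline{X}$-supports, so the whole span $V$ is contained in $C_Z$. For equality I would use a dimension count. There are $l'$ independent $X$-stabilizer generators, so $\dim C_X^\perp=l'$; the $k$ operators $\overline{X}_1,\dots,\overline{X}_k$ are independent modulo $C_X^\perp$ (otherwise a product of them would have support in $C_X^\perp$, i.e. be an $X$-stabilizer, contradicting that it is a nontrivial logical operator), so $\dim V=l'+k$. Combining $l'+l''=N-k$ with $\dim C_Z=N-\mathrm{rank}(H_Z)=N-l''=l'+k$ gives $\dim V=\dim C_Z$, and with $V\subseteq C_Z$ this forces $V=C_Z$.

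The routine parts are the coding-theory identities (rows of $H_X$ generate $C_X^\perp$, and $(C_Z^\perp)^\perp=C_Z$) and the dimension bookkeeping. The one step that needs genuine care is pinning down the \emph{correct} target subspace: I must make sure that the commutation-to-orthogonality translation together with the stated direction of the CSS inclusion lands the supports in $C_Z$ rather than in $C_Z^\perp$. Keeping the duality directions mutually consistent is where an error would most easily creep in; once it is fixed, the symmetric claim for $X$-codewords, $Z$-stabilizers, and $\overline{Z}$ is obtained by simply interchanging the roles of $X$ and $Z$ throughout.
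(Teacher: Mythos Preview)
Your proof is correct and follows essentially the same route as the paper's: both show that the supports of the $X$-stabilizer generators and the $\overline{X}$-operators lie in $C_Z$ via the commutation/orthogonality relation with the $Z$-stabilizers, and then conclude equality by a dimension count ($l'+k=N-l''=\dim C_Z$). Your write-up is in fact slightly more careful than the paper's in that you explicitly justify the independence of the $\overline{X}$-supports modulo $C_X^\perp$, which the paper asserts without argument.
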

\begin{proof}
Assume that the code uses $N$ physical qubits, $k$ logical qubits and $j$ $Z$-stabilizer generators.
 The space of these $Z$-codewords is an $(N-)j$-dimensional subspace of $\mathbb{Z}_2^N$. For these codewords there is a one-to-one correspondence with strings of $N$ bits, where the $h$\textsuperscript{th} bit of such a string is 1 if $X_h$ is used in a given $X$-stabilizer or $\overline{X}$-operator.
  $X$-stabilizers and $\overline{X}$-operators  commute with all $Z$-stabilizers and thus their support corresponds to codewords, because $X$-stabilizer and $\overline{X}$-operators  contain only $X$ and $I$-operators  in CSS codes.
 There are $N-j-k$ independent $X$-stabilizer and $k$ $\overline{X}$-operators and thus we have $N-j$ linear independent codewords, which form a basis for all codewords of $C_Z$, since the space is $(N-j)$-dimensional.
\end{proof}

As a small example let us now reproduce the no-loss QPC($n,m$)-efficiency from \cite{Ewertqpcprl,jeongPRL}, where only $ZZ$=1 measurements are used, by applying our new method.
QPC($n,m$) encodes $k=1$ logical qubit, see App. \ref{a:review}.
We will begin with the special case of $m$=1 qubit per block and the stabilizer generators for this case are given by $X_{o,1}X_{o+1,1}$ with $o$ $\in \{1,\dots,n-1\}$ (the first index labels the block number and the second index labels the qubits within a block; the numbering of the two codes will be omitted). There are no $Z$-type stabilizer generators for the case $m$=1 (corresponding to $j=0$ in the proof of \autoref{lem:enum}) By using $X$ we refer to one code, while we refer to both codes by using $XX$.  $\overline{Z} $ is given by $\prod_{o=1}^n Z_{o,1}$ and $\overline{X}$ is given by  $X$ on an arbitrary qubit.\newline
 The $\overline{ZZ}$ information is already given, since we get the $ZZ$-information for each qubit pair.
Then we also need the $\overline{XX}$ information for a successful identification of the logical Bell state. Since we only need the $XX$-information from one qubit pair, the only possible case where the logical BM fails is when all $ZZ$ measurements yield 0 which corresponds to 1 out  of $2^n$ cases, because we have $n$-1 $X$-stabilizer generators and 1 $\overline{X}$-operator (using \autoref{lem:enum} as a one-to-one correspondence between $Z$-codewords and $X$-type operators).
 Increasing $m>1$ does not change the efficiency, because the number of $X$-operators stays the same and thus the number of codewords in $C_Z$ remains the same due to \autoref{lem:enum}. The weight of the $X$ operators changes too, but this does not matter, since, like in the $m$=1 case, any $X$ stabilizer generator fully covers an $\overline{X}$, such that the combinatorial argument remains the same.
Therefore, we obtain $1-2^{-n}$ as the logical BM efficiency of QPC($n,m$) in the absence of photon loss.
 For the specific example of QPC(2,2), see \autoref{fig:qpc22} where still "0" is "+1" and "1" is "-1" according to \cref{eq:transformation}.\newline

After describing our simple formalism of performing logical BMs with linear optics, we are able to make some general statements about the logical BM efficiency of CSS codes.

\begin{theorem}\label{thm:count}
Using either  $XX$=1-BMs or $ZZ$=1-BMs for all qubit pairs (and not a combination of both) gives a logical BM efficiency of at least $\frac{1}{2}$ in the loss-free case for an $[N,1,d]$-CSS($C_X$,$C_Z$)-code
\end{theorem}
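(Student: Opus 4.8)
The plan is to treat the two measurement choices symmetrically and prove the bound for $ZZ$=1-BMs, the case of $XX$=1-BMs following by exchanging the roles of $C_X$ and $C_Z$ (equivalently $\overline{X}$ and $\overline{Z}$) throughout. Introduce $\mathbf{x},\mathbf{z}\in\mathbb{Z}_2^N$ for the binary support vectors of $\overline{X}$ and $\overline{Z}$, identify each BM outcome string with a vector in $\mathbb{Z}_2^N$ via \cref{eq:transformation}, and write $\langle\cdot,\cdot\rangle$ for the mod-$2$ inner product. As explained above, with a $ZZ$=1-BM the $ZZ$-outcome is read out completely and forms a codeword $d\in C_Z$, so $\overline{ZZ}=\langle\mathbf{z},d\rangle$ is always recovered; a logical BM therefore succeeds exactly when $\overline{XX}=\langle\mathbf{x},c\rangle$ can be inferred, where $c\in C_X$ is the $XX$-codeword, of which we observe only the components $c_j$ with $j\in\operatorname{supp}(d)$. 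Since inferring a linear functional of $c$ from its restriction to $\operatorname{supp}(d)$ is a property of $\operatorname{supp}(d)$ alone, success depends only on $d$, and the first step is to record the clean failure criterion: the BM fails for $d$ iff there is a codeword $c\in C_X$ with $\langle\mathbf{x},c\rangle=1$ whose support is disjoint from $\operatorname{supp}(d)$. By \autoref{lem:enum} such a $c$ is precisely a manifestation of $\overline{Z}$, so failure means the unobserved positions cover a logical $\overline{Z}$.

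Because all codewords of $C_Z$ occur with equal probability, the efficiency equals the fraction of $d\in C_Z$ that succeed, so it suffices to bound the failure set $A\subseteq C_Z$ by $|A|\leq\tfrac12|C_Z|$. The central idea is the fixed-point-free involution $\sigma(d)=d+\mathbf{x}$ on $C_Z$: it is well defined because $\mathbf{x}\in C_Z=\ker H_Z$ (as $\overline{X}$ commutes with every $Z$-stabilizer) and fixed-point-free because $\mathbf{x}\neq0$. It partitions $C_Z$ into $|C_Z|/2$ two-element orbits, so the bound follows once one shows that no orbit lies entirely in $A$, i.e.\ that $d$ and $d+\mathbf{x}$ cannot both fail.

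To prove this key claim I would argue by contradiction, taking a certificate $c_1\in C_X$ for $d$ and a certificate $c_2\in C_X$ for $d+\mathbf{x}$ (each with $\langle\mathbf{x},c_i\rangle=1$). Disjointness of $\operatorname{supp}(c_1)$ from $\operatorname{supp}(d)$ gives $\langle c_1,d\rangle=0$, while disjointness of $\operatorname{supp}(c_2)$ from $\operatorname{supp}(d+\mathbf{x})$ forces $d$ and $\mathbf{x}$ to agree on $\operatorname{supp}(c_2)$, so that $\langle c_2,d\rangle=\langle c_2,\mathbf{x}\rangle=\langle\mathbf{x},c_2\rangle=1$. Then $g=c_1+c_2\in C_X$ satisfies $\langle\mathbf{x},g\rangle=0$ and $\langle g,d\rangle=1$. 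Here the hypothesis $k=1$ enters decisively: for a single logical qubit one has the decomposition $C_X=\operatorname{rowspace}(H_Z)\oplus\langle\mathbf{z}\rangle$ with $\langle\mathbf{x},\mathbf{z}\rangle=1$, so $\langle\mathbf{x},g\rangle=0$ forces $g\in\operatorname{rowspace}(H_Z)$, a pure $Z$-stabilizer; but then $\langle g,d\rangle=0$ since $d\in\ker H_Z$, contradicting $\langle g,d\rangle=1$.

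I expect this last algebraic step, promoting $g$ from $\langle\mathbf{x},g\rangle=0$ to membership in the stabilizer group, to be the crux, and it is exactly where the restriction to $[N,1,d]$ codes is unavoidable: for $k>1$ the quotient $C_X/\operatorname{rowspace}(H_Z)$ has dimension $k$, a single functional $\langle\mathbf{x},\cdot\rangle$ no longer pins $g$ down to a stabilizer, and the paired-orbit argument breaks. The remaining ingredients are routine: verifying the failure criterion from the determinacy of a linear functional under partial observation, and checking $\mathbf{x}\in C_Z$, $\mathbf{z}\in C_X$, and $\langle\mathbf{x},\mathbf{z}\rangle=1$ from the (anti)commutation relations. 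Finally, the $X\leftrightarrow Z$ symmetry of the CSS construction yields the identical bound for $XX$=1-BMs, completing the argument.
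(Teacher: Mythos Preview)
Your proof is correct, but it takes a noticeably longer route than the paper's. The paper observes directly from \autoref{lem:enum} that $C_Z$ splits into two cosets of equal size: supports of pure $X$-stabilizers and supports of $\overline{X}\cdot(\text{$X$-stabilizer})$. Every codeword $d$ in the second coset \emph{is} the support of a manifestation of $\overline{X}$, so the $ZZ$=1 BMs reveal the $XX$-information on all of $\operatorname{supp}(d)$, and multiplying these yields $\overline{XX}$; hence that entire half succeeds and the bound follows in one line. Your involution $d\mapsto d+\mathbf{x}$ pairs exactly these two cosets, so the paper's statement ``the $\overline{X}$-coset always succeeds'' is a strictly sharper version of your ``at least one of $\{d,d+\mathbf{x}\}$ succeeds'': it tells you \emph{which} one. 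What your approach buys is an explicit failure criterion (a $\overline{Z}$-manifestation disjoint from $\operatorname{supp}(d)$) and a clean algebraic mechanism that isolates precisely where $k=1$ enters, namely in promoting $\langle\mathbf{x},g\rangle=0$ to $g\in\operatorname{rowspace}(H_Z)$; this is more informative about the structure of failures even if it is overkill for the bare inequality. The paper's argument, by contrast, never needs to look at $C_X$ or the failure side at all.
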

\begin{proof}
Without loss of generality we are assuming that we use $ZZ$=1-BMs. 
We are looking for possible $ZZ$ measurement outcomes, which fulfill the $j$ $Z$-stabilizer conditions and thus are codewords of $C_Z$.
We make use of \autoref{lem:enum}. Combinations corresponding to $\overline{X}$ times a product of $X$-stabilizers have "1"s along an $\overline{X}$-operator by definition and therefore the BMs give additional information along the support of $\overline{X}$ such that we can identify the logical Bell state. 
Thus, one obtains an efficiency of at least $\frac{1}{2}$.
\end{proof}
 This theorem cannot be directly generalized to codes encoding more than one logical qubit, because it is possible that $\overline{X}_1$ and $\overline{X}_2$ have some physical qubits in common. 
 Note that the assumption that the guaranteed information needs to be 1 in order to get full information cannot be omitted, since, for example, the efficiency for the planar surface code(2,2) using only $ZZ$=0-BMs with $p_\text{adv}=0$ is smaller than $\frac{1}{2}$ .\\

 \begin{theorem}\label{thm:xzgeo}
The logical BM efficiency of a CSS code cannot exceed $\frac{1}{2}$ using linear optics with identical guaranteed-information BMs for all qubit pairs and $p_{\text{adv}}=0$ if support($\overline{X}$)= support($\overline{Z}$).
\end{theorem}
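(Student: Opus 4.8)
The plan is to reduce the statement to a single canonical choice of physical BM and then to exploit the parity structure that $\operatorname{supp}(\overline X)=\operatorname{supp}(\overline Z)=:S$ forces on the admissible measurement patterns. Since all BMs are identical and $p_{\text{adv}}=0$, it suffices to treat the case of $ZZ$=1 BMs; the $XX$=1 case follows by interchanging the roles of $X$ and $Z$ (the hypothesis is symmetric in them), and the ``=0'' and $YY$-type variants only ever reveal the useful $XX$/$ZZ$ information on a less favourable set, so they can be bounded by the same argument (indeed for the ``=0'' choices the efficiency is generally strictly below $\tfrac12$, consistent with the remark following \cref{thm:count}). For $ZZ$=1 BMs the $\overline{ZZ}$ outcome is always available in the loss-free case, so the logical BM succeeds exactly when the collected $XX$ information---present precisely on the support of the measured $ZZ$ codeword $z\in C_Z$ under \cref{eq:transformation}---covers at least one manifestation of $\overline X$. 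Because every codeword of $C_Z$ occurs with equal probability, the efficiency equals $|\mathrm{Succ}|/|C_Z|$ with $\mathrm{Succ}=\{z\in C_Z:\ \exists\ \text{manifestation } w\ \text{of}\ \overline X,\ \operatorname{supp}(w)\subseteq\operatorname{supp}(z)\}$.

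Next I would use $\overline Z$ itself as a parity check on $C_Z$. Writing $v=\chi_S$ for the indicator vector of $S$, the $\mathbb Z_2$-functional $z\mapsto\langle z,v\rangle$ is nontrivial on $C_Z$: by \autoref{lem:enum} the codewords of $C_Z$ are exactly the supports of products of $X$-stabilizers and of $\overline X$-operators, and since $\overline Z=Z_S$ commutes with every $X$-stabilizer but anticommutes with $\overline X$ (so $|S|$ is odd), one gets $\langle z,v\rangle=0$ iff $z$ is a pure $X$-stabilizer product and $\langle z,v\rangle=1$ iff $z$ lies in the logical coset $L$ of manifestations of $\overline X$. Hence $L$ and the stabilizer coset each contain exactly half of $C_Z$. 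The assertion $\text{efficiency}\le\tfrac12$ is therefore equivalent to $\mathrm{Succ}\subseteq L$, i.e.\ to the statement that a product of $X$-stabilizers can never, on its own, fully cover a manifestation of $\overline X$.

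Establishing this covering exclusion is the heart of the argument and the step I expect to be the main obstacle. I would argue by contradiction: if some stabilizer-coset codeword $z$ satisfied $\operatorname{supp}(w)\subseteq\operatorname{supp}(z)$ for a manifestation $w$ of $\overline X$, then $w'=z+w$ would be a second manifestation with $\operatorname{supp}(w')=\operatorname{supp}(z)\setminus\operatorname{supp}(w)$ disjoint from $w$, so a bad $z$ exists iff $\overline X$ admits two disjoint logically equivalent representatives. The support condition is what should forbid this, but only through the stabilizer structure attached to $S$: for the codes of interest the $X$-checks are self-orthogonal (equivalently even-weight, as for the self-dual planar color codes), which forces every manifestation of $\overline X$ to the common weight parity $|S|\equiv1\pmod 2$ and simultaneously gives $w\perp V_X$; then $w\subseteq z$ with $z$ a stabilizer product yields $|w|\equiv\langle w,z\rangle\equiv0$, contradicting that parity. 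The delicate point is to show that ``no two disjoint equivalent representatives'' is genuinely compelled by $\operatorname{supp}(\overline X)=\operatorname{supp}(\overline Z)$ and to control the freedom in choosing representatives, rather than reading it off from incidental features of one code; this is exactly where the full strength of the hypothesis must be used.

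Once $\mathrm{Succ}\subseteq L$ is secured, $|\mathrm{Succ}|\le|L|=\tfrac12|C_Z|$ gives the bound at once, and together with \cref{thm:count} it pins the $ZZ$=1 (and, by the $X\!\leftrightarrow\!Z$ interchange, the $XX$=1) efficiency to exactly $\tfrac12$; the remaining identical-BM choices are then dispatched by the reductions noted in the first paragraph.
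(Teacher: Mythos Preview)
Your framework is right and, once unpacked, aims at exactly the inequality the paper proves: the parity functional $z\mapsto\langle z,\chi_S\rangle$ on $C_Z$ is nothing other than $\overline{ZZ}$, so your target $\mathrm{Succ}\subseteq L$ is literally the statement ``success forces $\overline{ZZ}=1$''. The gap is in how you try to establish the covering exclusion. You invoke self-orthogonality of the $X$-checks to obtain $w\perp V_X$, but self-orthogonality is not part of the hypothesis, and you yourself flag this step as the unresolved ``main obstacle''.

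What you are missing is that the hypothesis already hands you $w\perp V_X$ for free. If $w$ is any manifestation of $\overline X$, then by assumption $w$ is also a manifestation of $\overline Z$; since $Z_w$ commutes with every $X$-stabilizer, $\langle w,z\rangle=0$ for all $z$ in the $X$-stabilizer span. Your contradiction then closes immediately: $w\subseteq z$ gives $\langle w,z\rangle=|w|$, but $|w|$ is odd (anticommutation of $X_w$ with $Z_w$), so $z$ cannot be an $X$-stabilizer support. No self-orthogonality, no disjoint-representative detour.

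The paper's proof is this same computation done directly rather than by contradiction: given success via some manifestation $w$, evaluate $\overline{ZZ}$ using $w$ itself as the $\overline Z$-representative. All $ZZ_j$ equal $1$ on $w$ and $|w|$ is odd, hence $\overline{ZZ}=1$. Success is therefore confined to the two logical Bell states with $\overline{ZZ}=1$, giving efficiency at most $\tfrac12$. This three-line route makes the coset bookkeeping and the covering-exclusion reformulation unnecessary.
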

\begin{proof}
Without loss of generality we are using $ZZ$=1-BMs. 
The number of physical qubits in support($\overline{X}$)=support($\overline{Z}$) is odd, as otherwise it is impossible that $\overline{X}$$\overline{Z}$=-$\overline{Z}$$\overline{X}$. 
We demand to measure only "1"s as $ZZ$ results along the path of an $\overline{X}$ in order to resolve $\overline{XX}$, but by assumption this means along the path of a $\overline{Z}$. 
Adding 1 for an odd number of times gives 1(mod 2) and this means that the ability of doing a successful logical BM is only given if $\overline{ZZ}$=1, which is only the case for half of the logical Bell states. 
Thus, the averaged logical BM efficiency cannot exceed $\frac{1}{2}$.
\end{proof}
In other words, in this case, the well-known bound of $\frac{1}{2}$ for the BM efficiency of a standard physical linear-optics BM also applies to the logical (transversally performed) linear optics BM efficiencies.   
However, it is possible to circumvent \autoref{thm:xzgeo} by using a combination of different physical BMs, e.g. a BM with guaranteed $XX$-information on some physical qubits and a BM with guaranteed $ZZ$-information on the  other physical qubits, and we will see in the following subsection that this additional degree of freedom can have a huge impact on the logical BM efficiency.\\

\begin{corollary}\label{col:xzgeo}
Using an $[N,1,d]$-CSS($C_X$,$C_Z$)-code with $C_X=C_Z$, it is impossible to exceed a logical BM efficiency of $\frac{1}{2}$  using linear optics with identical guaranteed-information BMs for all qubit pairs and $p_{\text{adv}}=0$ .
\end{corollary}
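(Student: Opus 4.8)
The plan is to reduce the corollary to \autoref{thm:xzgeo} by exhibiting, under the hypothesis $C_X=C_Z$, a choice of logical operators $\overline{X}$ and $\overline{Z}$ that share the same physical-qubit support. By \autoref{lem:enum}, the admissible supports of $\overline{X}$-representatives are precisely the codewords in $C_Z\setminus C_X^\bot$, while the admissible supports of $\overline{Z}$-representatives are the codewords in $C_X\setminus C_Z^\bot$. Writing $C\coloneqq C_X=C_Z$, both of these pools collapse to the single set $C\setminus C^\bot$, and the CSS condition $C_X^\bot\subseteq C_Z$ becomes the dual-containing relation $C^\bot\subseteq C$. Hence one and the same codeword $\mathbf{v}\in C\setminus C^\bot$ can serve simultaneously as the support of an $X$-type and a $Z$-type logical operator, each of which automatically commutes with all stabilizers.

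The only genuinely nontrivial requirement is that the two operators built on a common support $\mathbf{v}$ truly anticommute, so that they constitute a valid logical pair rather than a commuting redundancy; this holds iff $|\mathrm{supp}(\mathbf{v})|$ is odd. Establishing the existence of such an odd-weight codeword in $C\setminus C^\bot$ is the heart of the argument. First I would count dimensions: $C_X=C_Z$ forces $\dim C_X=\dim C_Z$, i.e.\ the numbers of $X$- and $Z$-stabilizer generators coincide, $l'=l''$; combined with $l'+l''=N-1$ this yields $N=2l'+1$, so $N$ is odd. Next, since $C^\bot\subseteq C=(C^\bot)^\bot$, the code $C^\bot$ is self-orthogonal and therefore every one of its codewords has even weight. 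Consequently, if $C$ consisted solely of even-weight codewords we would have $\mathbf{1}\in C^\bot\subseteq C$, forcing $N=\mathrm{wt}(\mathbf{1})$ to be even and contradicting the parity of $N$. Thus $C$ must contain an odd-weight codeword, and since all codewords of $C^\bot$ are even, this codeword necessarily lies in $C\setminus C^\bot$. Taking $\mathbf{v}$ to be it gives $\mathrm{supp}(\overline{X})=\mathrm{supp}(\overline{Z})=\mathrm{supp}(\mathbf{v})$.

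Finally I would invoke \autoref{thm:xzgeo}, whose hypothesis $\mathrm{support}(\overline{X})=\mathrm{support}(\overline{Z})$ is now satisfied, to conclude that the logical BM efficiency cannot exceed $\tfrac12$ for identical guaranteed-information BMs with $p_{\text{adv}}=0$. The main obstacle is the middle step: one is tempted to assume that any common representative works, but a shared support must have odd weight to define an anticommuting pair, and excluding the degenerate situation in which $C$ is an even-weight code (equivalently $\mathbf{1}\in C^\bot$) is exactly what the oddness of $N$ together with the self-orthogonality of $C^\bot$ is needed for.
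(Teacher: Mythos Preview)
Your argument is correct and follows the paper's route: identify both the $\overline{X}$- and the $\overline{Z}$-supports with $C\setminus C^\bot$ when $C_X=C_Z=:C$ (the paper phrases this as $\dim C/C^\bot=k=1$, so a one-dimensional $\mathbb{Z}_2$-space admits only one nontrivial coset) and then invoke \autoref{thm:xzgeo}. Your parity detour to exhibit an odd-weight $\mathbf{v}\in C\setminus C^\bot$ is correct but superfluous, and it is not ``the heart of the argument'': once you know that any $\mathbf{v}\in C\setminus C^\bot$ is simultaneously a valid $\overline{X}$-support and a valid $\overline{Z}$-support for the single logical qubit, the operators $X^{\otimes\mathbf{v}}$ and $Z^{\otimes\mathbf{v}}$ are representatives of $\overline{X}$ and $\overline{Z}$ and hence anticommute by definition, forcing $|\mathrm{supp}(\mathbf{v})|$ odd automatically---which is exactly the first sentence of the proof of \autoref{thm:xzgeo}, so there is no need to re-derive it.
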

\begin{proof}
The support of $\overline{Z}$-operators is in $C_X\backslash C_Z^{\bot}$ and for $\overline{X}$-operators it is in $C_Z\backslash C_X^{\bot}$.
 By assumption $C_X=C_Z$ and the dimension of  $C_X/ C_Z^{\bot}$=1.
 Thus,  support($\overline{X}$)= support($\overline{Z}$), because it is not possible to find two different bases over the field $\mathbb{Z}_2$ for a 1-dimensional space.
 Using \autoref{thm:xzgeo} the result follows.
\end{proof}

This corollary includes a large class of codes such as all two-dimensional planar color codes.

\begin{theorem}\label{thm:csslinearineq}
A CSS code with $j$ $Z$-stabilizer generators and $N$ physical qubits cannot achieve a better logical BM efficiency than 
\begin{equation}
1-2^{-\left(N-j\right)}\,,
\end{equation}
if only $ZZ$=1-BMs are used and $p_{\text{adv}}=0$.
\end{theorem}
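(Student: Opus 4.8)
The plan is to prove this upper bound by exhibiting a single measurement outcome that is guaranteed to cause a failure and computing the probability with which it occurs. The key structural observation is that, when only $ZZ$=1-BMs are used, every physical qubit pair always returns its $ZZ$ result, so $\overline{ZZ}$ (and each $\overline{Z}_t\overline{Z}_t$) is always recoverable, and the entire question of success reduces to whether $\overline{XX}$ can be resolved. Since we obtain $XX$ on a pair only when its $ZZ$ result is $1$, the logical BM succeeds precisely when the support of some representative of an $\overline{X}$-operator lies inside the set of pairs returning $ZZ=1$; whether such a representative exists is a function of the $ZZ$ outcome alone. Hence success or failure is determined entirely by which codeword of $C_Z$ is observed.

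First I would invoke the earlier distributional fact that all codewords of $C_Z$ occur with equal probability, together with \autoref{lem:enum}, which identifies the space of $Z$-codewords as an $(N-j)$-dimensional subspace of $\mathbb{Z}_2^N$. This gives exactly $2^{N-j}$ possible $ZZ$-outcomes, each carrying weight $2^{-(N-j)}$. Consequently the efficiency equals the fraction of $C_Z$-codewords that permit a successful identification, so any guaranteed failure among them caps the efficiency.

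Second, I would single out the all-zeros codeword, corresponding to every physical $ZZ$ result being $0$. With $p_{\text{adv}}=0$ no pair delivers any $XX$-information in this case, so the set of pairs returning $ZZ=1$ is empty; since no nontrivial $\overline{X}$-operator has empty support, $\overline{XX}$ is completely undetermined and the logical BM necessarily fails. This outcome alone contributes a failure probability of at least $2^{-(N-j)}$, whence the efficiency cannot exceed $1-2^{-(N-j)}$.

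The computation is short, and I do not expect a genuine obstacle so much as two points that must be stated cleanly: the count $|C_Z|=2^{N-j}$ with equiprobable codewords, and the reduction showing that success depends on the $ZZ$-outcome only. The subtlest step is confirming that when all $ZZ$ results vanish there is truly \emph{no} residual information enabling the identification; this is immediate here because $p_{\text{adv}}=0$ forbids any advanced physical BM from contributing $XX$-data, so the empty $ZZ=1$ set leaves $\overline{XX}$ fundamentally inaccessible. As a consistency check, specializing to QPC($n,m$) gives $N-j=n$ and reproduces the known bound $1-2^{-n}$, which this theorem attains.
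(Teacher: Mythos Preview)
Your proposal is correct and follows essentially the same argument as the paper: both identify the all-zeros codeword of $C_Z$ (the neutral element of the linear code) as a guaranteed failure because no $XX$-information is then available, and both count the $2^{N-j}$ equiprobable $ZZ$-outcomes to obtain the bound. Your write-up supplies more justification for the reduction to $C_Z$-codewords and for why the all-zeros outcome forces failure, but the underlying idea is identical.
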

\begin{proof}
The codewords of $C_Z$ constrain the possible outcomes of the physical $ZZ$-measurements.
 Every linear code has a neutral element, which means that every $ZZ$ measurement yields 0, such that no $\overline{XX}$-information is available.
 Counting the number of all possible physical equiprobable $ZZ$ outcomes gives $2^{N-j}$. 
\end{proof}
Notice that QPC($n,m$) uses $N=n\cdot m$ qubits and $j=n\cdot(m-1)$ $Z$-stabilizer generators and therefore \autoref{thm:csslinearineq} bounds the logical BM efficiency to $1-2^{-n}$ which can be achieved by transversal, static linear optics with $p_{\text{adv}}=0$ \cite{Ewertqpcprl}.

Up to now we only considered linear optical BMs that were either guaranteed-information BMs or BMs with an efficiency independent of the input Bell state.
Let us now also consider the most general ancilla-, feedforward-free linear optical BM compatible with the upper bound of Ref. \cite{quant-ph/0007058}, which identifies the four Bell states unambiguously with arbitrary probabilities $p_1,\dots,p_4$ where $\sum_{j=1}^4p_j\leq2$ (the two cases discussed so far only include $p_1=p_2=p_3=p_4=\frac{1}{2}$ and some pair $p_k=p_l=1$ while another pair $p_r=p_s=0$). 
In the following, we will derive an upper bound on the logical BM efficiency based on these ancilla-free linear optical BMs.
In order to do so, let us assume that we do not perform the physical BMs by ourselves. Instead we let another trustworthy party called Charlie, who claims to be able to perform such a general ancilla-free linear-optical BM (in principle, we could send him random Bell states and check whether his measurement results follow the desired statistics), do the physical BMs. We will then use Charlie's measurement outcomes for identifing the logical Bell state.
However, Charlie's protocol simply consists of performing one of the six guaranteed-information BMs according to a probability distribution that ensures that our expected statistics are fulfilled.
It is intuitively clear that such a probability distribution exists, but in App. \ref{app:probdistri} we explicitly show how Charlie could construct such a probability distribution.
Next he will inform us about his measurement outcomes, but he will not tell us which guaranteed-information BMs were used by him. Since Charlie has all the information that we got and, additionally, he also knows which guaranteed-information BMs were used, his averaged logical BM efficiency is therefore at least as large as our logical BM efficiency using general ancilla-free linear optical BMs.
Furthermore, Charlie's averaged logical BM efficiency is given by the logical BM efficiency of fixed guaranteed-information formations weighted  by their probability of occurrence according to Charlie's chosen probability distribution.
Therefore, his averaged  logical BM efficiency is at most as large as the logical BM efficiency of his optimal guaranteed-information formation.

Note that this argument does not rely on assumptions about $\eta$ and therefore we can restrict ourselves to the set of guaranteed-information BMs when looking for optimal logical BM efficiencies. However, the choice of the optimal guaranteed-information formation may depend on $\eta$.
In addition, one can also use a similar argument when considering general advanced linear-optical BMs (which are only forbidden to identify all four Bell states deterministically \cite{quant-ph/9809063}), although one needs to be cautious what desired statistics can be obtained by using probabilistically advanced guaranteed-information BMs.
 For example, Charlie could not obtain a BM that always identifies two specific Bell states, identifies one state with probability $50\%$, and never identifies the remaining Bell state by using his strategy.
However, it is always possible to obtain a BM whose efficiency is independent of the state by using advanced $ZZ=0$ and $ZZ=1$, each with probability $\frac{1}{2}$.
For such a protocol, the whole linear-optics constraint simplifies to an additional erasure channel where the transmission is given by the physical BM efficiency $p_{BM}$.
As a consequence, analyses of arbitrary quantum codes considering erasures and possibly additional errors can be used as a lower bound on the logical BM efficiencies using static linear optics.
By employing advanced guaranteed-information BMs as described in Ref. \cite{EwertAdvancedBM}, it is possible to obtain $p_{BM}=\frac{3}{4}$ without using entangled ancillae. 
For example, it can therefore be immediately seen that for $\eta>\frac{2}{3}$ a logical BM efficiency of approximately unity can be obtained by using large-distance toric codes \cite{PhysRevA.81.022317}.

\subsubsection{\label{sec:codes}Application of loss-free results to specific codes}
We will discuss the logical BM efficiencies for planar surface and planar color codes and how we can influence these efficiencies by using combinations of  different linear optical physical BMs in a transversal and static manner.
Detailed calculations of the efficiencies can be found in App. \ref{sec:calculations}.
We emphasize that all general results for codes of arbitrary size apply only to the loss-free case.
 Results including loss have been obtained by a small script that simply tests all codewords and erasure patterns eventually leading to the coefficients $e_j$ in \cref{eq:trans} \footnote{The script can be found at \url{https://github.com/schmidtfrk/CSSlogBM}.}.
 Therefore, the computation time grows exponentially with the number of qubits such that we only obtain results for small codes like, for example, QPC(4,2) \footnote{Note that the logical BM efficiency of general QPC($n,m$) is known when considering loss and standard $ZZ$=1 BMs \cite{Ewertqpcprl,EwertPRAqpc}. However, here we also consider combinations of different guaranteed-information BMs, breaking the block structure of the gained information and thus rendering the technique of Refs. \cite{Ewertqpcprl,EwertPRAqpc} inapplicable.}, surface(3,2), and the Steane code, as will be discussed in the subsequent section.

One can use combinatorics for obtaining the no-loss efficiency of logical BMs within planar color or planar surface codes.
 If we restrict ourselves to using only $ZZ$=1 measurements, we can see immediately that the logical BM no-loss efficiency of planar color codes in the no-loss case is given by $\frac{1}{2}$ due to \autoref{col:xzgeo} and \autoref{thm:count}.
Therefore, we decided to study the efficiency of planar surface codes which are closely related to planar color codes, but there is no no-go result for achieving a high efficiency when considering only standard linear optical BMs.
 The non-trivial part of counting those physical BM patterns of surface($n,m$) that allow for an identification of the logical Bell states is equivalent to counting all solutions in the following combinatorial problem.
Consider a board with $(n- 1)\cdot m$ squares with values 0 and 1 where one wants to cross the board (along the side with distance $m$) and one can only walk on squares of 1 and to adjacent squares of also 1 (moving diagonally is allowed). 
With this geometrical interpretation it is easy to see that increasing $m$ will result in a decrease of the no-loss efficiency, because in addition to the previous rows one must be able to cross the extra rows.
 This is already a first difference to QPC($n,m$) where it is impossible to decrease the no-loss efficiency by increasing $n$ or $m$.
Furthermore, we can see  that the expression for the exact no-loss efficiency of surface($n,m$) becomes much more complicated for $n>3$, because then a square is not necessarily a neighbor to all squares in the previous row anymore. A small summary of these efficiencies can be seen in \autoref{tab:surfaceefficiencyzz1}.
More details on the above interpretation and the calculations can be found in App. \ref{sec:calculations}.

Let us now consider the possibility of combining $ZZ=1$ and $XX=1$ measurements.
When allowing for a fixed combination (still no feedforward) of $ZZ$=1 and $XX$=1 measurements we have many choices where to use which BM.
We have tried all possibilities for small codes and looked at the formations that maximize the no-loss efficiency.
 Then we also tried to understand their efficiency with a combinatorial argument that can be generalized to bigger codes.
 There are many formations that allow for a huge improvement of the no-loss BM efficiency, but the corresponding efficiencies with loss may differ even if the no-loss efficiency is the same for two formations.
 As an example for a formation for which the no-loss efficiency increases we consider measurement formations similar to \autoref{fig:formations}(a) for surface($n,m$), giving a no-loss efficiency of $1-2\cdot4^{-\text{max}(n,m)}$.
 The general idea of these formations is that we perform e.g. $ZZ$=1 measurements along the support of one $\overline{Z}$ so that we obtain the $\overline{ZZ}$-information for sure, we perform $XX$=1 measurements elsewhere, and we choose  $\overline{Z}$ in such a way that only one of the $ZZ$=1 measurements needs to give additional information for also obtaining the $\overline{XX}$-information. Such a formation can also be used for planar color codes as it can be seen in \autoref{fig:formations}(c) for the Steane code.

\begin{theorem}
\label{thm:colorcodexz}
Using an $[N,1,d]$ planar color code one can achieve a logical BM efficiency of $1-2^{-d}$ if $ZZ$=1-BMs are performed on the $d$ qubit pairs belonging to one of the three boundaries of the triangle (see \cref{fig:colorcode488} in App. \ref{a:review}) and by performing $XX$=1-BMs on all other qubit pairs.
\end{theorem}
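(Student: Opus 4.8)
The plan is to reduce the whole efficiency computation to the recoverability of $\overline{XX}$ and then to control that recoverability through the restriction of the logical and stabilizer operators to the chosen boundary. First I would fix the boundary $B$ of $d$ qubit pairs carrying the $ZZ$=1-BMs and let $R$ denote the remaining qubit pairs carrying the $XX$=1-BMs. Since the triangular color code admits a representative of $\overline{Z}$ supported exactly on $B$ (with $|B|=d$) and every $ZZ$=1-BM returns its $ZZ$-result with certainty, the value of $\overline{ZZ}=\prod_{j\in B}ZZ_j$ is always available; hence the logical BM can fail only through $\overline{XX}$. Following \cref{eq:transformation} and \autoref{lem:enum}, the $XX$-result at a boundary pair $j\in B$ is revealed precisely when $ZZ_j=1$, so the set of pairs with known $XX$-information is $K=R\cup B_1$, where $B_1=\{j\in B:ZZ_j=1\}$, and $\overline{XX}$ is obtainable iff some representative of $\overline{X}$ has support inside $K$.

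The heart of the argument is a statement about the boundary restrictions. Writing $\pi_B$ for the restriction of a support to $B$, I would show that the $X$-stabilizer generators restrict to the even-weight code on $B$ while $\overline{X}$ restricts to an odd-weight vector. The even-weight inclusion is immediate from commutation: every $X$-stabilizer commutes with the boundary $\overline{Z}$, so it overlaps $B$ in an even number of qubits; dually, $\overline{X}$ anticommutes with the boundary $\overline{Z}$, forcing an odd overlap (consistent with $\mathrm{supp}(\overline{X})=\mathrm{supp}(\overline{Z})$ having odd size, see \autoref{thm:xzgeo} and \autoref{col:xzgeo}). That these even-weight restrictions actually span the entire even-weight code is the one genuinely geometric input: on the triangular lattice (\cref{fig:colorcode488}) the plaquettes meeting a boundary restrict to weight-two operators on consecutive boundary pairs, and these consecutive ``links'' generate all of the even-weight subcode of $\mathbb{Z}_2^{B}$. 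Consequently the boundary restrictions of the representatives of $\overline{X}$ form exactly the odd-weight coset, which contains every single-qubit indicator $e_j$, $j\in B$.

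With this in hand the two directions are short. If $B_1\neq\emptyset$, choosing $j^{\ast}\in B_1$ and the representative whose boundary restriction is $e_{j^{\ast}}$ yields an $\overline{X}$-representative supported in $R\cup\{j^{\ast}\}\subseteq K$, so $\overline{XX}$ is recovered. Conversely, if $B_1=\emptyset$, i.e.\ all boundary $ZZ$-results are $0$, then every representative of $\overline{X}$ has odd (hence nonzero) overlap with $B$ and therefore touches an unknown qubit, so $\overline{XX}$ fails. Thus failure occurs exactly on the single boundary pattern $\pi_B(\mathbf{z})=\mathbf{0}$. Finally, because the even-weight restrictions together with the odd-weight $\pi_B(\overline{X})$ span $\mathbb{Z}_2^{B}$, the boundary pattern $\pi_B(\mathbf{z})$ is uniform over all $2^{d}$ possibilities (recall all codewords of $C_Z$ are equiprobable), so failure occurs with probability $2^{-d}$ and the efficiency is $1-2^{-d}$.

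The main obstacle is precisely the spanning statement of the second paragraph: everything else is commutation counting and linear algebra over $\mathbb{Z}_2$, but verifying that the boundary-adjacent plaquettes generate the full even-weight code requires reading off the connectivity of the boundary links from the color-code lattice. I expect this to follow cleanly for the standard $4.8.8$ (and $6.6.6$) triangular codes, where a boundary is a single connected line of qubits with overlapping plaquettes, but some care is needed to phrase it at a level of generality covering all planar color codes rather than one specific family.
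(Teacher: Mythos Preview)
Your proposal is correct and follows essentially the same route as the paper: both arguments fix $\overline{ZZ}$ by the boundary formation, reduce success to getting additional information from at least one boundary $ZZ$=1-BM, invoke \autoref{lem:enum}, and count via the $d$ independent boundary-restricted $X$-type objects ($d-1$ stabilizers plus $\overline{X}$). Your version is somewhat more explicit than the paper's in two places---you justify via the odd-weight coset why a single boundary qubit with $ZZ=1$ already suffices for $\overline{XX}$ (the paper simply asserts this), and you spell out the geometric spanning step (consecutive boundary plaquettes generating the even-weight code) that the paper compresses into ``it can be easily seen that only $d-1$ independent $X$-stabilizer generators have support on the boundary''---but these are refinements of the same argument, not a different one.
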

\begin{proof}
$\overline{ZZ}$ information is available by construction of the BM formation and in order to obtain  $\overline{XX}$ information additional information is only needed from one of the $d$ qubit pairs with $ZZ$=1-BMs.
 In order to calculate the probability that there is at least one qubit pair with additional information, we are using  \autoref{lem:enum}.
 We associate $Z$-codewords with the support of $\overline{X}$ and $X$-stabilizer generators.
 It can be easily seen that only $d-1$ independent $X$-stabilizer generators have support on the boundary and all other $X$-stabilizers are thus irrelevant for the argument.
 $\overline{X}$ has its support on all qubit pairs and hence we have $d$ independent objects with support on the boundary.
 The only possibility to have no $ZZ$=1 result for any of the $d$ qubit pairs is the combination which uses none of the $d$ objects, but there are $2^d$ possibilities and thus the BM efficiency is given by $1-2^{-d}$ .
\end{proof}
\noindent Of course, in \cref{thm:colorcodexz}, $XX$ and $ZZ$ can be exchanged like in the example of \autoref{fig:formations}(c).

There is no experimental reason as to why we should restrict ourselves to using only $XX$=1 and $ZZ$=1 measurements instead of using all possible six BMs that can identify two of the four Bell states by static linear optics.
 In fact, employing these less restricted formations (see \autoref{fig:formations}(b,d,e)), it is possible to obtain even higher no-loss efficiencies for small codes [surface(2,2), Steane, QPC(2,2)] by simply testing all possible formations.
For QPC($n,m$), we were able to generalize this observation.
Note that for QPC it is crucial to include $YY=1$ measurements, which one can easily understand for QPC(2,2) where otherwise (with only $ZZ=1$ and $XX=1$ measurements) the no-loss efficiency remains $\frac{3}{4}$ as for the $ZZ=1$-only case.
\begin{theorem}\label{thm:qpc}
The no-loss BM efficiency of QPC($n,m$) with static linear optics is given by $1-2^{-\left(n+m-1\right)}$ using the measurement formation as given in \autoref{fig:formations}(e).
\end{theorem}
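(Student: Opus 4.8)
\emph{Setup and strategy.} The plan is to write out the formation of \autoref{fig:formations}(e) explicitly and then reduce the success condition to a single linear non-triviality statement that can be counted with \autoref{lem:enum}. First I label the physical qubit pairs of QPC($n,m$) by $(o,i)$ with block index $o\in\{1,\dots,n\}$ and in-block index $i\in\{1,\dots,m\}$. The within-block $Z$-stabilizers force every $ZZ$-outcome to be constant across a block, so a $C_Z$-codeword is one bit $z_o$ per block and $\overline{ZZ}=\sum_{o=1}^n z_o$; the inter-block $X$-stabilizers force all blocks to share one common $x$-parity $p$, so $\overline{XX}=p=\sum_i x_{(o,i)}$ for any block. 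In the transformation of \cref{eq:transformation} a $ZZ{=}1$ BM yields $z_{(o,i)}$ for sure and $x_{(o,i)}$ iff $z_{(o,i)}{=}1$, an $XX{=}1$ BM the mirror statement, and a $YY{=}1$ BM yields $x_{(o,i)}{+}z_{(o,i)}$ for sure and the full pair iff that sum is $1$.

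\emph{The formation and the $\overline{YY}$ shortcut.} The formation I would use places a $YY{=}1$ BM on $(1,1)$, $XX{=}1$ BMs on the rest of block $1$ (the $(1,i)$ with $i\ge 2$), $ZZ{=}1$ BMs on the representatives $(o,1)$ of the remaining blocks ($o\ge 2$), and $XX{=}1$ BMs on all other pairs. The crucial observation is that these guaranteed bits sit exactly along the support of $\overline{Y}=\overline{X}\,\overline{Z}$, which for QPC is the ``hook'' consisting of all of block $1$ together with one qubit of each other block, of size $m+(n-1)=n+m-1$. Summing the guaranteed bits along this hook gives $y_{(1,1)}+\sum_{i\ge2}x_{(1,i)}+\sum_{o\ge2}z_o=\overline{XX}+\overline{ZZ}$, i.e. the logical value $\overline{YY}$ is \emph{always} obtained. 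A successful logical BM therefore only needs one further logical value, since the third then follows.

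\emph{The counting.} Next I would show that $\overline{XX}=p$ is recovered unless a single $(n+m-1)$-fold coincidence occurs. Block $o\ge 2$ reveals its parity, and hence $p$, exactly when its $ZZ{=}1$ BM on $(o,1)$ delivers its bonus, i.e. iff $z_o=1$; block $1$ reveals $p$ iff $x_{(1,1)}$ is recovered, which happens iff $z_1$ is learned, i.e. iff $y_{(1,1)}=1$ or some $x_{(1,i)}=1$ with $i\ge 2$. Whenever $\overline{XX}$ is known the remaining value follows from $\overline{ZZ}=\overline{YY}+\overline{XX}$. Hence the logical BM fails iff $z_o=0$ for all $o\ge 2$ and $z_1$ is not learned, i.e. iff the $n+m-1$ bits $y_{(1,1)}$, $\{x_{(1,i)}\}_{i\ge2}$ and $\{z_o\}_{o\ge2}$ all vanish. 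These are independent uniform coordinates on $C_X\times C_Z$ (the block-$1$ $x$-values and the per-block $z$-values are free, and $y_{(1,1)}=x_{(1,1)}+z_1$ is independent of the rest since $x_{(1,1)}$ occurs only there), so by the equiprobability of codewords the failure probability is $2^{-(n+m-1)}$. Equivalently, via \autoref{lem:enum}, the failing $C_Z$-part is the trivial combination of the relevant $X$-stabilizers and $\overline{X}$ restricted to the hook.

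\emph{Main obstacle.} The hard part is spotting the $\overline{YY}$ shortcut and then proving the independence count. Unlike \autoref{thm:csslinearineq} or \autoref{thm:colorcodexz}, here neither $\overline{XX}$ nor $\overline{ZZ}$ is guaranteed on its own; the mechanism that beats the $1-2^{-n}$ bound is that a $YY{=}1$ BM placed on the single pair where $\overline{X}$ and $\overline{Z}$ overlap lets the guaranteed information assemble $\overline{YY}$, after which the $ZZ{=}1$ and $XX{=}1$ bonuses need only pin down one more value, with the blocks $o\ge2$ providing alternative routes to $p$. The delicate step is verifying that the $n+m-1$ trigger bits are genuinely linearly independent, so that the all-zero event has probability exactly $2^{-(n+m-1)}$ and not larger, and confirming that $YY$ is indispensable: dropping it, as the QPC$(2,2)$ check shows, strands the efficiency at $\tfrac34$ rather than $\tfrac78$. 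I would first validate the count on QPC$(2,2)$ and on the degenerate cases QPC$(n,1)$ and QPC$(1,m)$, and then give the general independence argument above.
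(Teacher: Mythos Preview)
Your proof is correct and takes a genuinely different route from the paper. The paper first treats QPC$(2,2)$ by explicitly listing bases for the $C_X$- and $C_Z$-codewords, hand-counts the four undecodable combinations among $2^5$, and then runs two separate inductions (first $n\to n{+}1$ at $m=2$, then $m\to m{+}1$), tracking at each step that the total number of codeword pairs grows by one extra factor of $2$ relative to the number of failing pairs. You instead give a single direct argument: you observe that the guaranteed bits lie exactly on the support of one $\overline{Y}$ (the ``hook'' of size $n+m-1$) and therefore always assemble $\overline{YY}$, you reduce success to recovering $\overline{XX}$ through any block, and you identify the failure event as the simultaneous vanishing of $n+m-1$ linear coordinates which you then verify are independent on $C_X\times C_Z$.

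Your route is shorter and makes both the $\overline{YY}$ mechanism and the exponent $n+m-1$ structurally transparent (it is the weight of $\overline{Y}$), whereas the paper's inductive bookkeeping obscures this but has the advantage of showing explicitly how the efficiency evolves as the code grows. One minor caveat: the formation you write places $XX{=}1$ BMs on the off-hook pairs $(o,i)$ with $o\ge 2,\,i\ge 2$, while the paper's formation (as its proof uses it) places $ZZ{=}1$ BMs there. This does not affect correctness---in either case those BMs contribute no new information in the failure event and are irrelevant to the count---so your argument transfers verbatim to the formation of \autoref{fig:formations}(e); it would be worth saying so explicitly since the theorem references that specific formation.
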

The proof is rather lengthy and it is presented in App. \ref{sec:qpcproof}.
 It is unclear whether there exist even more efficient formations in the no-loss case, because we only found the optimal (transversal) formation  of QPC(2,2) and generalized its formation to QPC($n,m$)  \footnote{As the bounds for the linear-optics BMs derived in Refs. \cite{quant-ph/0007058,jeongfeedforward} exclude feedforward, it is still open whether a logical BM efficiency greater than $1-2^{-(n+m-1)}$ and possibly approaching $1-2^{-nm}$ can be achieved with ancilla-free static linear optics and general QPC($n,m$)\label{footnote:qpc}. For the particular case of QPC(2,2), however, our optimal guaranteed-information formation combined with our reduction of arbitrary transversal linear-optics BMs to the set of guaranteed-information BMs (see the discussion after Theorem 4) implies that $1-2^{-3}=\frac{7}{8}$ is already the ultimate limit for ancilla-free static transversal linear optics.)}.
 Let us now compare our new results for the logical BM efficiencies  obtained for QPC($n,m$) with the previous results.
 Our work, that of Refs. \cite{EwertPRAqpc,Ewertqpcprl}, and Lee et al. \cite{jeongfeedforward} have in common that they all use transversal linear optical BMs mostly without additional ancilla photons.
 Refs. \cite{EwertPRAqpc,Ewertqpcprl} use only standard BMs while here we (similar to Ref. \cite{jeongfeedforward}, but strictly without feedforward) include the possibility of linear optical transformations before standard BMs.
 Thus, importantly, our schemes do not rely upon feedforward operations, whereas especially the scheme of Ref. \cite{jeongfeedforward} does including the no-loss case.
 Using only standard BMs one obtains a no-loss efficiency of $1-2^{-n}$ for QPC($n,m$) \cite{EwertPRAqpc,Ewertqpcprl}.
 If we allow for additional linear optical transformations before the standard BMs, we can achieve an efficiency of $1-2^{-(n+m-1)}$. 
The idea behind this combination of guaranteed-information BMs is that we always obtain $\overline{YY}$-information so that we only need additional $\overline{XX}$ or $\overline{ZZ}$-information, where it is possible that the $ZZ$-measurement results are sufficient for obtaining the $\overline{ZZ}$-information, but the $XX$-measurement results are not sufficient for obtaining the $\overline{XX}$-information. Therefore, we will say that $\overline{ZZ}$ can be measured directly, while $\overline{XX}$ can only be measured indirectly. When considering only $XX$ and $ZZ$ guaranteed-information, it is impossible to obtain the $\overline{YY}$-information without having the logical Bell state identified. 
 Note that this improvement can be achieved solely based on static linear optics. If, however, we further allow for feedforward we can get an efficiency of $1-2^{-nm}$ \cite{jeongfeedforward} \footnote{See previous footnote.} (see \autoref{tab:qpcnolosscomparisonl} for a comparison of logical BM efficiencies of QPC).
 Recall that we only obtained results for arbitrary ($n,m$) in the no-loss case, whereas the works of Refs.  \cite{EwertPRAqpc,Ewertqpcprl,jeongfeedforward} also include loss.
 A counting script enabled us to also calculate the logical BM efficiency including loss for small codes like, for example, QPC(2,2) or QPC(4,2). Although for small losses, we still obtain an improvement with our new static linear-optics methods (see \autoref{fig:qpc22loss} and \autoref{fig:comparisonadvanced}(a)), unfortunately, with increasing loss our generalized scheme's efficiency scales worse (at least for QPC(3,2) and QPC(4,2)) than the scheme using only standard BMs. Next, we discuss the logical BM efficiencies including photon loss in more detail.

\begin{table}[]
\centering
\begin{tabular}{l|l}
QPC(n,m)     &  \(\displaystyle 1-2^{-n} \) \\\hline
surface(2,m) &  \(\displaystyle \frac{1}{2}\left(1+\left(\frac{1}{2}\right)^m\right)\) \\\hline
surface(3,m) &   \(\displaystyle \frac{1}{2}\left(1+\left(\frac{3}{4}\right)^m\right)\) \\\hline
surface(4,m) & \(\displaystyle \approx \frac{1}{2}+\frac{(41+7\sqrt{41})(7+\sqrt{41})^m}{41\cdot 4^{2m+1}}\)
\end{tabular}
\caption{Comparison of the no-loss efficiencies of QPC and planar surface codes when using only ZZ=1 measurements.}
\label{tab:surfaceefficiencyzz1}
\end{table}

\begin{table}[]
\centering
\begin{tabular}{l| l| l| l}
QPC   & static, standard \cite{EwertPRAqpc,Ewertqpcprl} & static, generalized & feedforward  \cite{jeongfeedforward}     \\\hline
(2,2) & \(\displaystyle \frac{3}{4}\)                                              & \(\displaystyle \frac{7}{8}\)                                                                    & \(\displaystyle \frac{15}{16}\)     \\\hline
(3,2) & \(\displaystyle \frac{7}{8}\)                                               & \(\displaystyle \frac{15}{16}\)                                                                    & \(\displaystyle \frac{63}{64}\)      \\\hline
(4,2) & \(\displaystyle \frac{15}{16}\)                                                 & \(\displaystyle \frac{31}{32}\)                                                                   & \(\displaystyle \frac{255}{256}\)    \\\hline
(3,3) & \(\displaystyle \frac{7}{8}\)                                               & \(\displaystyle \frac{31}{32}\)                                                                    & \(\displaystyle \frac{511}{512}\)   
\end{tabular}
\caption{Comparison of logical BM efficiencies in the loss-free
case for some small instances of QPC depending on the use of
(standard or generalized) static or non-static linear optics. The column "static, generalized" refers to the present work.}
\label{tab:qpcnolosscomparisonl}
\end{table}
\renewcommand{\arraystretch}{1.0}

\begin{figure}
\subfloat[\label{sfig:formationsurfacexz}]{%
  \includegraphics[width=0.2\textwidth]{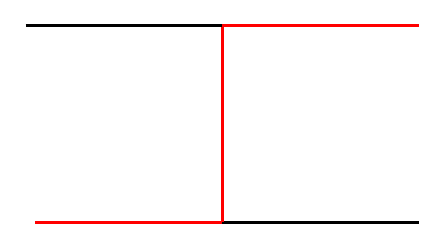}%
}
\subfloat[\label{sfig:formationsurfacexyz}]{%
  \includegraphics[width=0.2\textwidth]{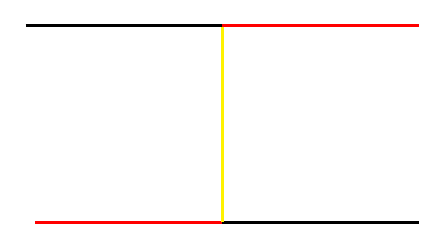}%
}

\subfloat[\label{sfig:formationsSteanexz}]{%
  \includegraphics[width=0.2\textwidth]{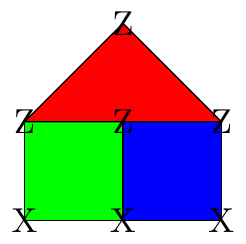}%
}
\subfloat[\label{sfig:formationsSteanexyz}]{%
  \includegraphics[width=0.2\textwidth]{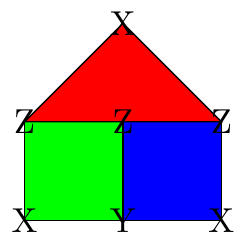}%
}

\subfloat[\label{sfig:formationsqpcnmxyz}]{%
  \includegraphics[width=0.25\textwidth]{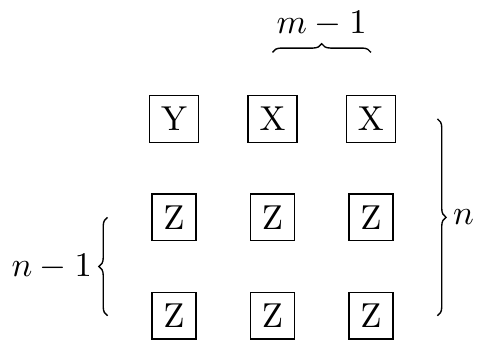}%
}

	\caption{(a) and (b) show good measurement formations for the surface(2,2) code resulting in a no-loss efficiency of $1-2^{-3}$ and $1-2^{-5}$.
		 Black bars denote $XX$=1 measurements, red bars denote $ZZ$=1 measurements, and yellow bars denote $YY$=1 measurements.
		(c) and (d) show good measurement formations for the Steane code resulting in a no-loss efficiency of $1-2^{-3}$ and $1-2^{-5}$.
		 (e) shows the measurement formation which gives a no-loss efficiency of $1-2^{-(n+m-1)}$ for QPC($n,m$).}
	\label{fig:formations}	
\end{figure}

\begin{figure}
\includegraphics[width=0.4\textwidth]{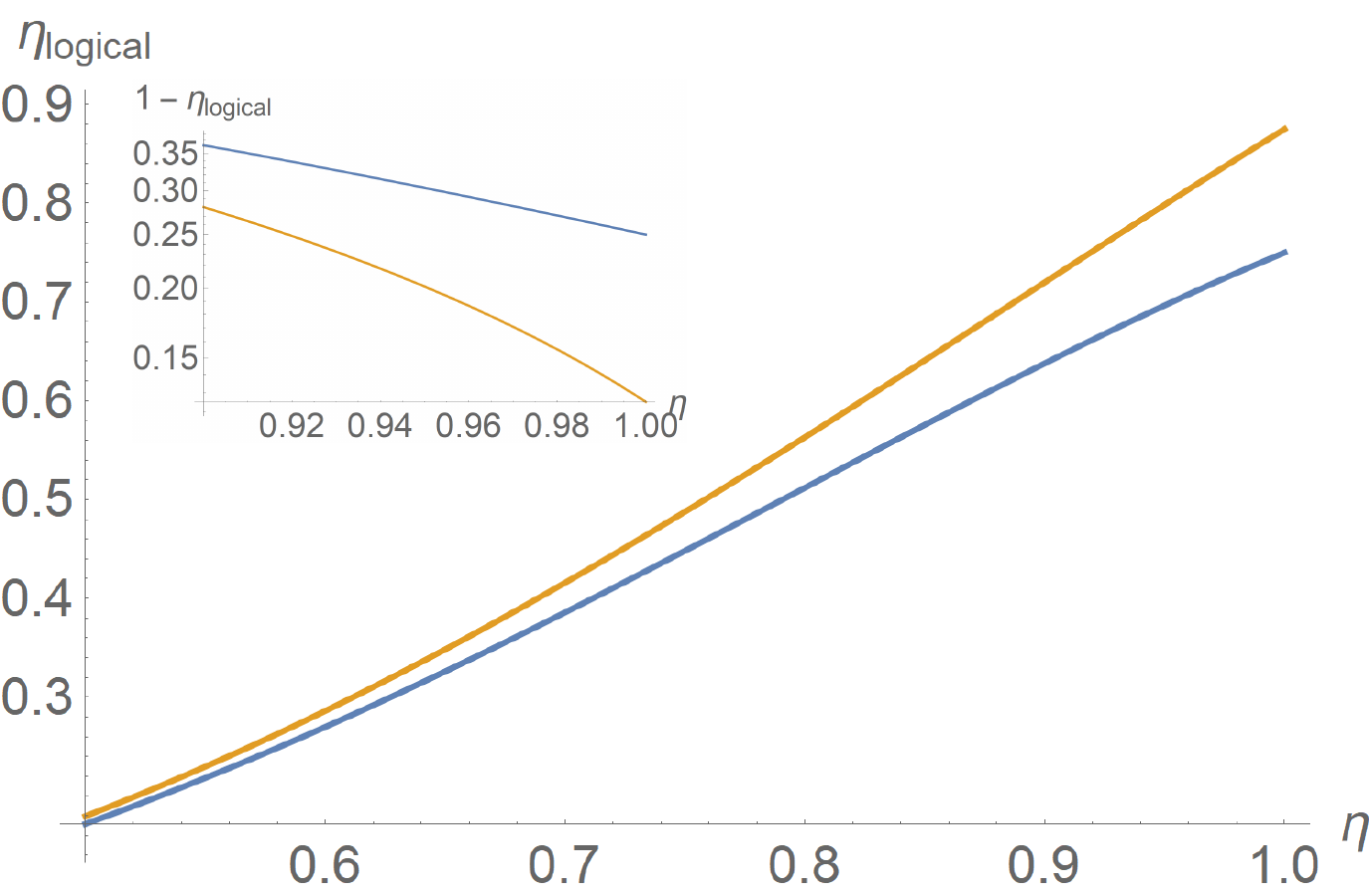}
\caption{Efficiency of the logical BM for QPC(2,2) in the presence of photon loss.
The orange line represents the use of our new static linear-optics logical BM employing a combination of $XX=1$, $YY=1$, and $ZZ=1$ BMs, while the blue line corresponds to the already known logical BM efficiency where only $ZZ=1$ BMs are used with static linear optics \cite{Ewertqpcprl}. }
\label{fig:qpc22loss}
\end{figure}

\subsubsection{Logical BM efficiencies including photon loss}
We shall now turn to the more general situation including photon loss, which is most important for quantum communication.
In principle, it is also easy to include photon loss in the logical BM efficiency.
We only need to go through all possible patterns of erasures and codewords, and check if we obtain $\overline{XX}$ and $\overline{ZZ}$-information taking into account that we do not get any information from erased qubit pairs. 
However, the photon loss may destroy the structure of the code that we have exploited so far for calculating the logical linear-optics BM efficiency efficiently in the loss-free case.
Therefore, we are only able to perform these calculations via python counting scripts, and so we will compare small codes like Steane, QPC(4,2) and surface(3,2) that are based on a similar amount of physical qubits.
It can be seen in \autoref{fig:comparisonadvanced} ($\eta$ denotes the transmission within a single repeater segment of length $L_0$)  that these codes, when employed in a 3\textsuperscript{rd} generation quantum repeater, cannot outperform direct, unencoded DR-qubit transmission considering only linear optical BMs without ancilla photons and without feedforward.
Nonetheless, note that the guaranteed-information formation as described in \autoref{thm:qpc} outperforms the previous results of Refs. \cite{EwertPRAqpc,Ewertqpcprl} for QPC(4,2) for $\eta>0.95$ (assuming only loss and no other types of errors).
As one expects it is possible to outperform direct transmission by making use of advanced BMs with sufficiently high $p_\text{adv}$, because we could use many ancilla photons so that we obtain almost ideal measurements. 
Interestingly, however, only $p_\text{adv}=0.5$ is needed so that the Steane code (similar to QPC(4,2)) outperforms direct transmission and such an advanced BM is still experimentally feasible, since it only requires static linear optics and additional single photon sources \cite{EwertAdvancedBM}.

\autoref{tab:numberofrepeater_Steanel} shows the minimal number of stations of a linear optical 3\textsuperscript{rd} generation quantum repeater based on the Steane code for which it becomes more efficient to use the repeater than using all resources for parallel direct transmission instead.
 Note that we employed the Grice scheme \cite{Grice} for obtaining $p_{\text{adv}}>0$ and we only considered the number of ancilla photons as a resource cost, ignoring the actual difficulty of producing an entangled auxiliary state.
 Therefore, we would obtain a different number of required repeater stations if we consider the scheme of Ref. \cite{EwertAdvancedBM} instead, where the preparation of the ancilla state is easier (at least for $v$=1), but more photons are needed.
The repeaterless bound \cite{PLOB} can be exceeded (assuming a secret key fraction of unity) with 123 stations and ideal static linear optical BMs involving only auxiliary single photon sources.

\begin{figure}[h]
\subfloat[\label{sfig:comparisonnoadvanced}]{%
  \includegraphics[width=0.4\textwidth]{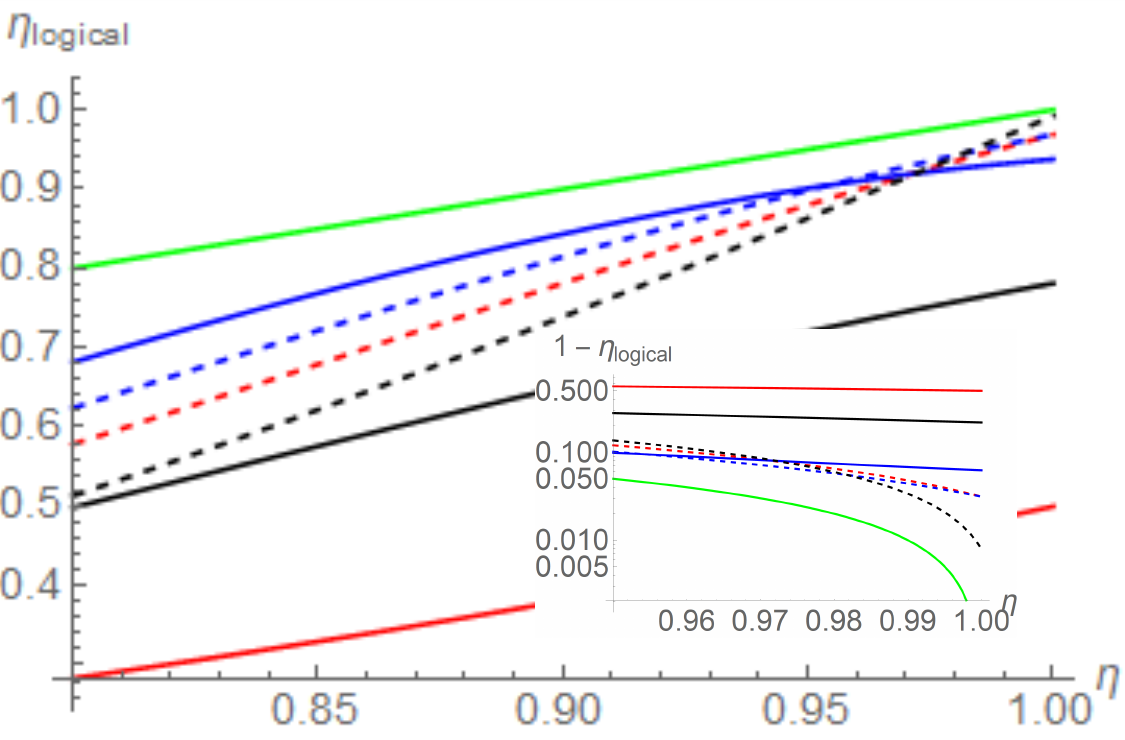}%
}

\subfloat[\label{sfig:comparison50advanced}]{%
  \includegraphics[width=0.4\textwidth]{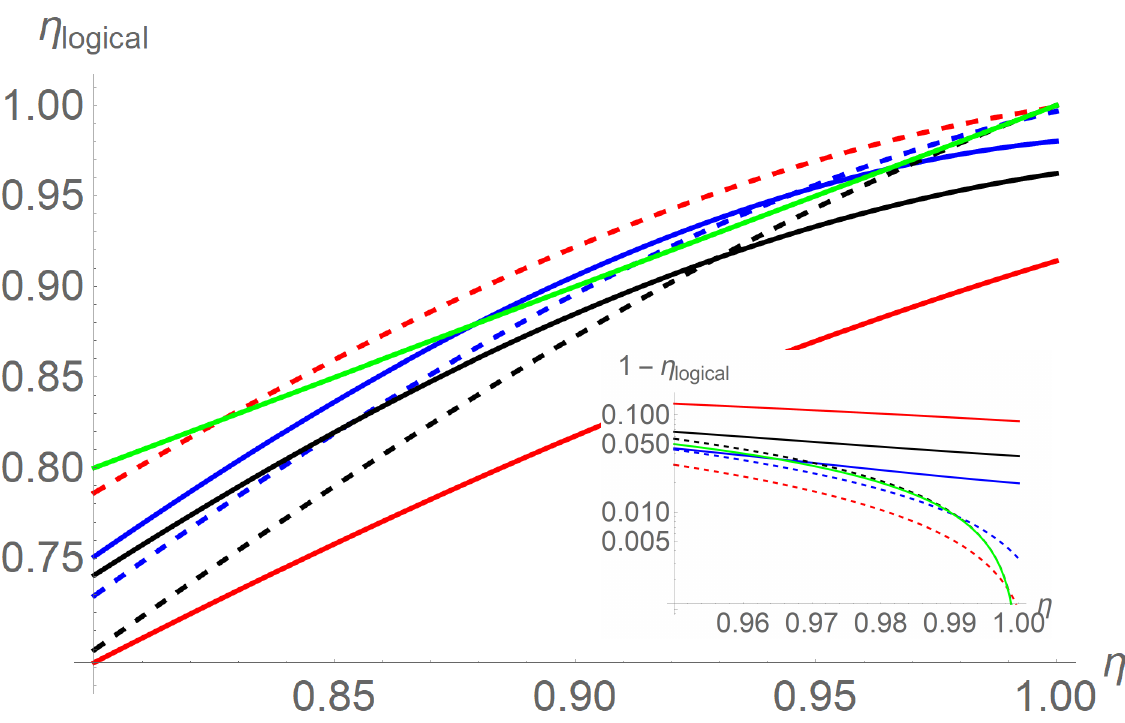}%
}

\subfloat[\label{sfig:comparison75advanced}]{%
  \includegraphics[width=0.4\textwidth]{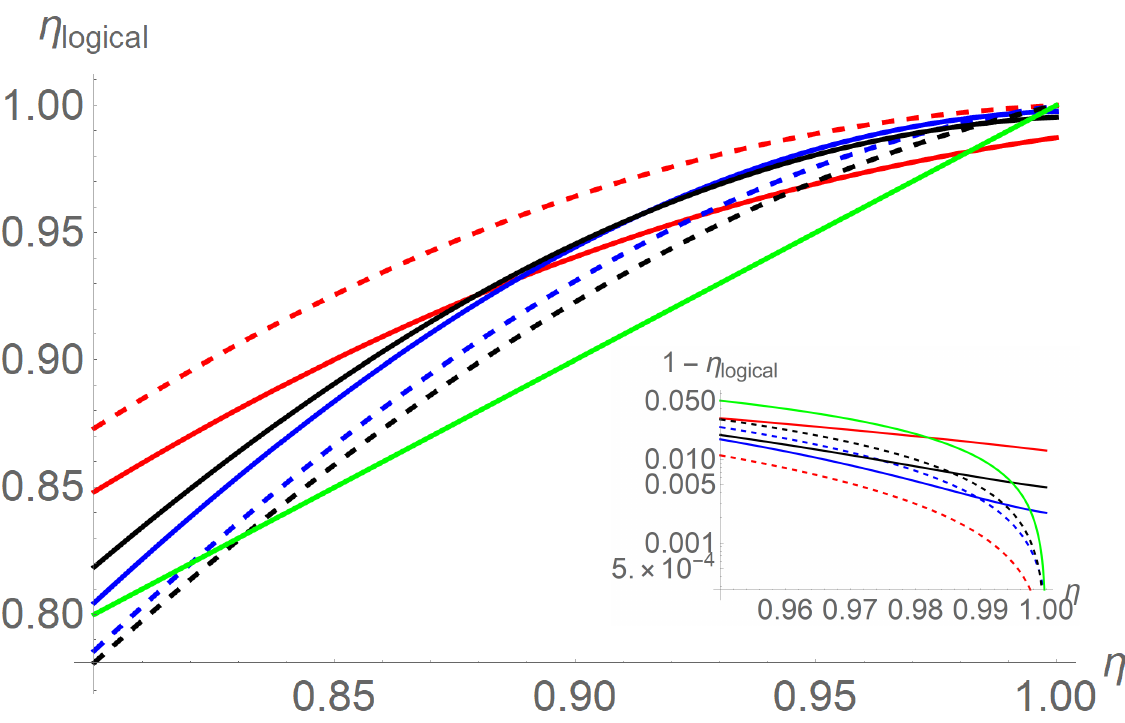}%
}

	\caption{Efficiency of the logical BM in the presence of photon loss.
		Solid  lines denote the use of only $ZZ$=1 measurements and dashed lines denote the use of $XX$,$YY$ and $ZZ$=1 measurements.
		 The green line $\eta_\text{log}=\eta$ corresponds to direct, unencoded DR-qubit transmission without a quantum repeater and is therefore the benchmark that should be outperformed.
		 Red corresponds to the Steane code, blue corresponds to QPC(4,2) and black corresponds to surface(3,2). (a) $p_{\text{adv}}=0$, (b) $p_{\text{adv}}=0.5$, (c)$p_{\text{adv}}=0.75$.}
	\label{fig:comparisonadvanced}	
\end{figure}

\begin{table}
\begin{tabular}{|c|c|c|c|c|}
	\hline 
	$v$ & best repeater & best gain & needed repeater& needed repeater  \\ 
	& spacing[km]	&	& stations (cost)&stations (PLOB) \\
	\hline 
	1 & 1.99 & 1.02489 & 377&123 \\ 
	\hline 
	2 & 4.46 & 1.09239 & 98&34  \\ 
	\hline 
	3 & 5.89 & 1.14495 & 66&23  \\ 
	\hline 
	4 & 6.60 & 1.17687 & 59&19  \\ 
	\hline 
	5 & 6.96 & 1.19432 & 58&17  \\ 
	\hline 
	6 & 7.13 & 1.20341 & 59&17  \\ 
	\hline 
	7 & 7.22 & 1.20805 & 62&16  \\ 
	\hline 
\end{tabular} 
\caption{Comparison of the performance of the Steane code with various parameters of $p_\text{adv}=1-2^{-v}$ of the Grice scheme using the optimal formation of guaranteed-information measurements. 
	The repeater spacing is calculated numerically, so that the ratio between code-based transmission to direct transmission is maximized.
	Using this maximum gain per repeater station,  the number of repeater stations is given that is needed to outperform parallel direct transmission taking into account the cost of the extra ancilla photons. 
	Adding more ancilla states reduces the number of needed repeater stations up to $v$=5, because the improvement to direct transmission increases. 
	However, adding more ancillae increases the number of required stations, because the ratio of improvement is bounded by the perfect-measurement case and more resources are needed to achieve a higher $v$. It is also shown how many repeater stations are needed to beat the PLOB-bound \cite{PLOB} under ideal conditions. }

\label{tab:numberofrepeater_Steanel}
\end{table}
\section{Conclusion}

Compared to previous works on logical Bell measurements with static linear optics \cite{Ewertqpcprl,EwertPRAqpc}, we have presented an alternative approach for identifying logical Bell states by looking at the codes in the stabilizer formalism instead of the state picture.
This enabled us to generalize the previous schemes for the specific code family QPC to arbitrary CSS codes. 
It was possible then to investigate the influence of different linear optical BMs on the logical BM efficiency.
For general CSS codes, we proposed to calculate the codewords of the two classical codes $C_X$ and $C_Z$ instead of finding a decomposition of a logical Bell state in the Bell basis as it was done in Refs. \cite{Ewertqpcprl,EwertPRAqpc}.
 We showed that  different codes need different choices of linear optical BMs in order to achieve a good logical BM efficiency.
 We also demonstrated that the possibility to increase the logical BM efficiency arbitrarily close to unity only based on linear optical BMs without ancilla photons and without feedforward is not a unique feature of QPC, although that code does have the best scaling of the no-loss efficiency (in comparison to the other codes we considered).

More specifically, we showed that through our generalizations the logical BM efficiency for any QPC can be significantly enhanced in the no-loss case and noticeably improved for small instances of QPC in the low-loss case. The logical BM efficiency of two-dimensional planar color codes approaches unity with our method, as opposed to the $\frac{1}{2}$-limit that exists for the standard method, as we proved. Several small codes such as QPC(4,2), surface(3,2), and the prominent Steane code (the smallest instance of a color code) were shown to exceed the repeaterless bounds in an encoded one-way all-optical quantum repeater, provided that the physical BMs involved are assisted by ancilla photons and selected according to our new method.

A drawback of our method is that it is a rather brute-force approach and codes typically have a lot more algebraic structure. We only utilized the feature of the codes being  CSS codes.
 Therefore, it might still be possible to apply our method and adapt it to particular subclasses of CSS codes resulting in a much better computational performance. 

Furthermore, we showed that the whole set of the most general ancilla-, feedforward-free transversal linear optical BMs can be reduced to the set of guaranteed-information BMs, as introduced in our work, when looking for optimal logical BM efficiencies. In addition, it can also be useful to consider BMs whose efficiency is independent of the input state. This allowed us to reduce the linear-optics constraint to an erasure channel of transmittance $p_{BM}$ in addition to the actual photon loss. Therefore, this simplification only gives interesting results when ancillae are employed. 

Finally, it is conceivable that one obtains a higher logical BM efficiency if one does not rely on transversal BMs, but a more complex linear optical network.

\begin{acknowledgments}
We thank F. Ewert for helpful discussions. We also thank  Q.com and Q.Link.X from the BMBF in Germany for funding.
\end{acknowledgments}

\appendix
\section{\label{a:review}Brief review of relevant codes}

\subsection{QPC}

The quantum parity code (QPC) was introduced in Ref. \cite{qpcintro}, but we use a modified version like that in Ref. \cite{PhysRevLett.112.250501}.
The QPC($n,m$) can be seen as a concatenation of two repetition codes of length $m$ and $n$ with respect to the $Z$- and $X$-basis. 
 Thus, the code has a block structure.
 The first level of encoding is the physical qubit (e.g. dual-rail encoding). The next level is the block level where a block $\{\ket{0}^{(m)}:=\ket{0}^{\otimes m},\ket{1}^{(m)}:=\ket{1}^{\otimes m}\}$ is given by $m$ repetitions in the $Z$-basis. The third and last level is the logical level\newline $\{\frac{1}{\sqrt{2}}\left(\ket{0}^{(n,m)}\pm \ket{1}^{(n,m)}\right):=\frac{1}{\sqrt{2^n}}\left(\ket{0}^{(m)}\pm \ket{1}^{(m)}\right)^{\otimes n} \}$ and it is given by $n$ repetitions of blocks in the $X$-basis. The first repetition code is used to correct $X$-errors while the second repetition code is used to correct $Z$-errors, and thus QPC($n,m$) is an $[n\cdot m,1 ,\text{min}(n,m)]$-code.
 A very famous special case of this family of codes is QPC(3,3), also known as the  Shor code \footnote{Notice that a code is just defined as a subspace and relabeling codewords does not matter at all. The Shor code in Ref. \cite{Nielsen:2011:QCQ:1972505} is equal to our defined QPC(3,3) up to a logical Hadamard gate.}. \newline
  It is also easy to define the code in the stabilizer formalism. Each block of $m$ qubits is stabilized by the $m-1$ stabilizer generators $ Z_{l,j}Z_{l,j+1} \quad \forall j \in \{1,\dots,m-1\}$. We can define Pauli operators for each block, for example,  as
\begin{equation}
\begin{aligned}
X^{(m)}_l&:=\prod_{j=1}^m X_{l,j}\\
Z^{(m)}_l&:=Z_{l,1}
\end{aligned}
\label{eq:qpc_block}
\end{equation}
where $l$ denotes the different blocks.
Due to the concatenation of the codes, we can obtain the $X$-stabilizer generators as 
$ X^{(m)}_j X^{(m)}_{j+1} ,\quad \forall j \in \{1,\dots,n-1\}$.
As a consequence, we have $n\cdot (m-1)+n-1=nm-1$ stabilizer generators and we use $n\cdot m$ physical qubits.
Thus, we can see again that this code encodes one logical qubit, $k=1$.
Similar to \cref{eq:qpc_block}, we can see that logical Pauli operators of QPC are given, for example, as \footnote{Note that the product of a logical operator and a stabilizer is also a logical operator.}
\begin{equation}
	\begin{aligned}
	\overline{X}&=X^{(m)}_1\\
	\overline{Z}&=\prod_{j=1}^{n} Z^{(m)}_j
	\end{aligned}
\end{equation}

\subsection{Planar surface code}
Surface codes are defined via a cellulation of a surface.
We define a cellulation as a division of the surface into polygonal cells such that all cells meet edge-to-edge and vertex-to-vertex.
Every edge corresponds to a physical qubit while faces or vertices can be associated with $Z$- or $X$-stabilizers, respectively. When considering a surface code on a torus one obtains the toric code. In our work, we will consider a  planar two-dimensional surface with boundaries. 
However, there are different kinds of quantum codes which are called planar surface codes. On the one hand, there are codes that use surfaces with always the same type of boundaries including holes.
 On the other hand, there are codes that use surfaces with different types of boundaries having no holes. In this paper, we are referring to the latter and we assume a square tessellation.
The dual lattice is obtained by mapping faces to dual vertices, edges to dual edges, and vertices to dual faces. Dual vertices are connected by dual edges if their corresponding faces on the primal lattice are adjacent.
 There are two kinds of different boundaries. First there are smooth boundaries, which are named this way, because they appear smooth on the primal lattice, but appear rough on the dual lattice. There are also rough boundaries named for analogous reasons. An example of such boundaries and a combination of both boundaries can be seen in \autoref{fig:planarboundaries}.
 \begin{figure}
 	\centering
 	\includegraphics[width=0.8\linewidth]{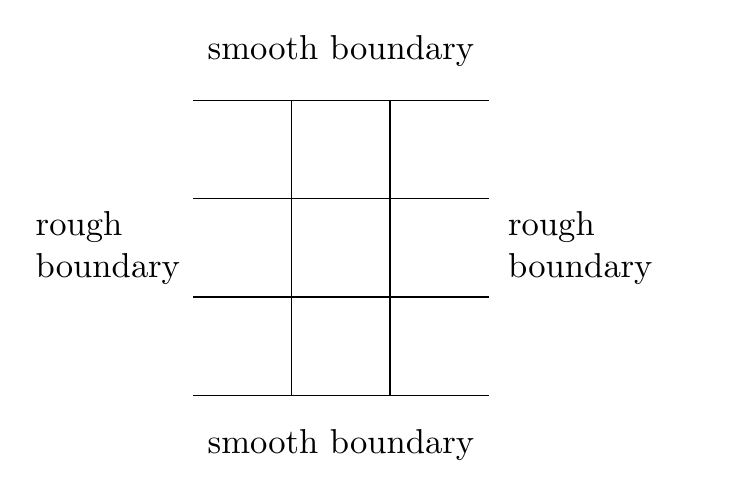}
 	\caption{Lattice on a planar surface with two smooth and two rough boundaries. Smooth boundaries are rough in the dual lattice and rough boundaries are smooth in the dual lattice.}
 	\label{fig:planarboundaries}
 \end{figure}
The stabilizers of this code are generated in the following way. The $Z$-stabilizer generators are given via faces on the primal lattice, while the $X$-stabilizer generators are given via vertices.
As an example let us construct the planar surface code ($n$=3, $m$=2).
 \begin{figure}
 	\centering
 	\includegraphics[width=0.3\linewidth]{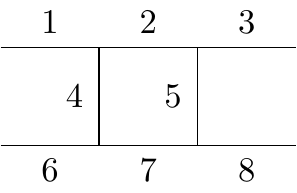}
 	\caption{Numbering of the surface(3,2) code.}
 	\label{fig:surfacenumbering}
 \end{figure}
The $Z$-stabilizer generators (the numbering of the qubits is defined in \autoref{fig:surfacenumbering}) are given by $Z_1Z_4Z_6,Z_2Z_4Z_5Z_7,Z_3Z_5Z_8$ and the $X$-stabilizer generators are given by $X_1X_2X_4,X_2X_3X_5,X_4X_6X_7,X_5X_7X_8$.
Thus, we have $N-k=7$ stabilizer generators defining surface(3,2) that encodes $k=1$ logical qubits into $N$=8 physical qubits. 
$\overline{Z}$ is given by e.g. $Z_1Z_2Z_3$ (up to the multiplication of a a stabilizer) and $\overline{X}$ is given by e.g. $X_1X_6$. This means the minimum weight of $\overline{Z}$ in a planar surface($n,m$) is given by $n$ while $m$ gives the minimum weight of $\overline{X}$. It is a nice feature of surface codes that logical $\overline{Z}$ and $\overline{X}$ can be associated with elements of the first homology group of the primal/dual lattice of the surface. Nice introductions to these codes (and also to color codes, which are closely related) can be found in Refs.  \cite{topocodelecture,lidar2013quantum,1410.0069}.

\subsection{Planar color code}

Color codes are also defined via cellulations of surfaces similar to surface codes. However, we need to assume a three-valent cellulation and it needs to be three-colorable. Then we can identify each vertex with a physical qubit and every face corresponds to an $X$- and a $Z$-stabilizer. This means all qubits that correspond to vertices spanning a face are in the support of this face stabilizer. An exemplary graph of planar color codes is shown in \autoref{fig:colorcode488}. An $\overline{X}$- or a $\overline{Z}$-operator is given by applying $X$ or $Z$ on all qubits on the code. It can also be seen in \autoref{fig:colorcode_steane} that the smallest color code with $d=3$ is the well-known Steane code. According to the construction of the planar color codes, the $X$-stabilizer generators are given by $\{X_1X_3X_5X_7,X_2X_3X_6X_7,X_4X_5X_6X_7\}$ in agreement with the Steane code $X$-stabilizer generators given in \cite[p. 456, Fig. 10.6]{Nielsen:2011:QCQ:1972505}. An example of a low weight $\overline{X}$-operator would be $X_1X_2X_3$ confirming $d=3$. Similarly, the $Z$-operators of this planar color code are consistent with  those of the Steane code given in Ref. \cite{Nielsen:2011:QCQ:1972505}.
\begin{figure}
	\centering
	\includegraphics[width=0.8\linewidth]{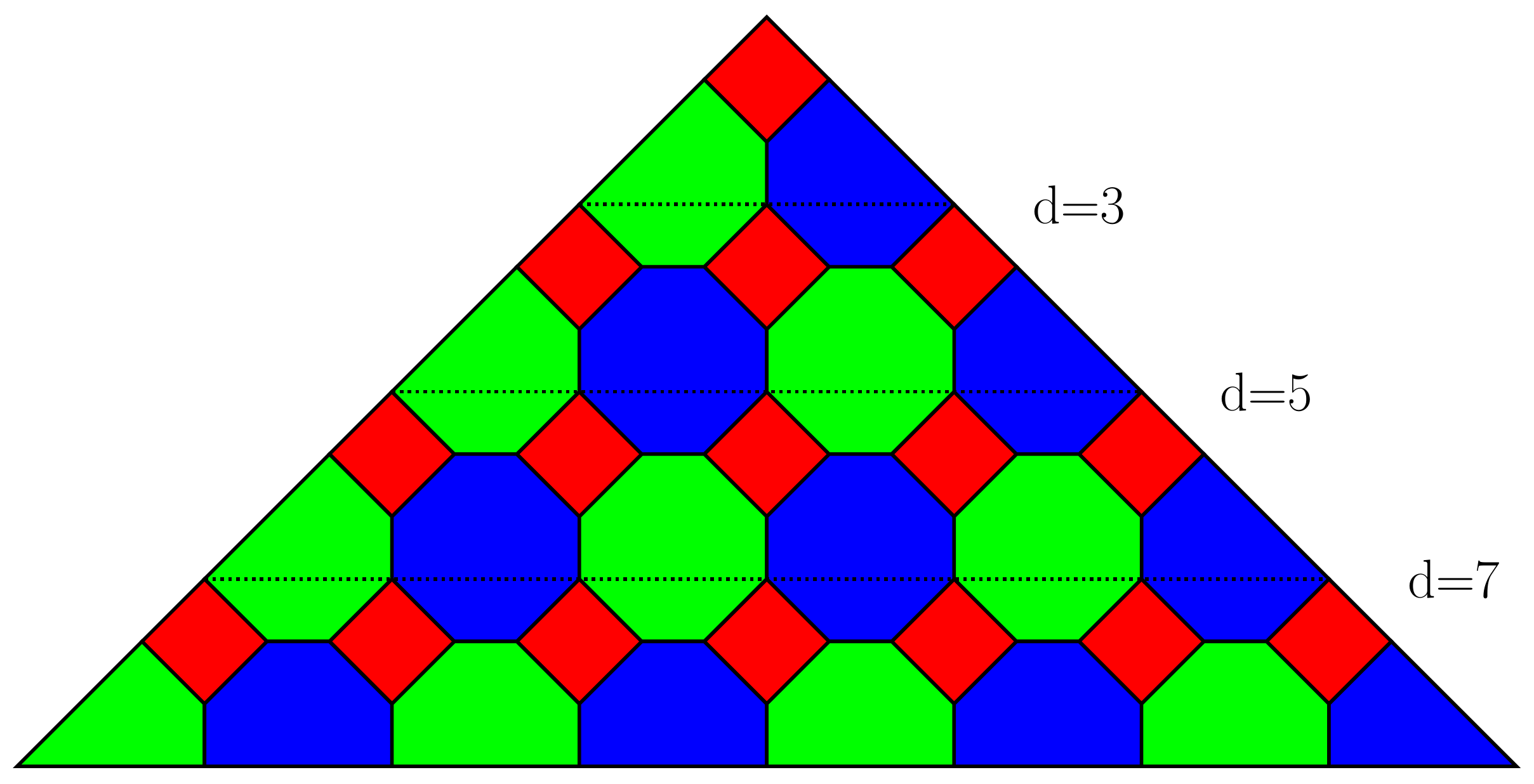}
	\caption{Family of 4-8-8 tessellated triangular color codes. E.g. a code with a code distance of 5 can be obtained by only considering the triangle above the dotted line labeled by ’d=5’.}
	\label{fig:colorcode488}
\end{figure}

\begin{figure}
	\centering
	\includegraphics[width=0.8\linewidth]{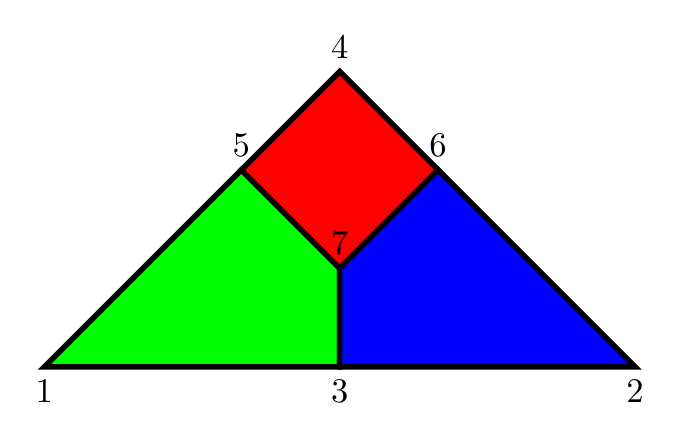}
	\caption{The Steane code as a planar color code. The qubit labeling was chosen in this particular way in order to be consistent with the definition of the Steane code given in Ref. \cite{Nielsen:2011:QCQ:1972505}.}
	\label{fig:colorcode_steane}
\end{figure}

\section{Unconstrained BMs}
Here we will give a short review of the logical BM efficiencies in the absence of the linear-optics constraint for the same small codes that have been compared with each other including the constraint in the main text in \autoref{fig:comparisonadvanced} (a).
In \autoref{fig:comparisonnonlinear} it can be seen that the Steane code achieves always the best transmission while QPC(4,2) achieves always the worst.
However, comparing this ranking with the efficiencies when considering only $ZZ$=1 measurements, one can see that the order of the ranking is just reversed  (as already given by the no-loss efficiencies).
Interestingly, when allowing for combinations of different guaranteed-information BMs the Steane and surface(3,2) codes are at least as good as QPC(4,2) in the loss-free case, but for $\eta<0.98$ QPC(4,2) gives the best performance.
Therefore, it is not possible to estimate a code's performance assuming linear optical BMs and photon loss by knowing only, on the one hand, the  performance including loss without the linear-optics constraint and, on the other hand, the loss-free logical BM efficiency with the linear-optics constraint.
 Ref. \cite{eppingcomparison} compared the costs of quantum repeaters for different quantum codes without linear-optics constraint and  concluded that the 23-qubit Golay code seems to be a really good code for large overall distances (see also \autoref{fig:comparisonnonlinear}). However, \autoref{col:xzgeo} applies to this code, making the code useless for a quantum repeater when considering the same guaranteed-information BMs on all qubit pairs.
As the 23-qubit Golay code is rather large, we did not search for a scheme with different guaranteed-information BMs, neither without nor with loss.

\begin{figure}
\centering
\includegraphics[width=0.8\linewidth]{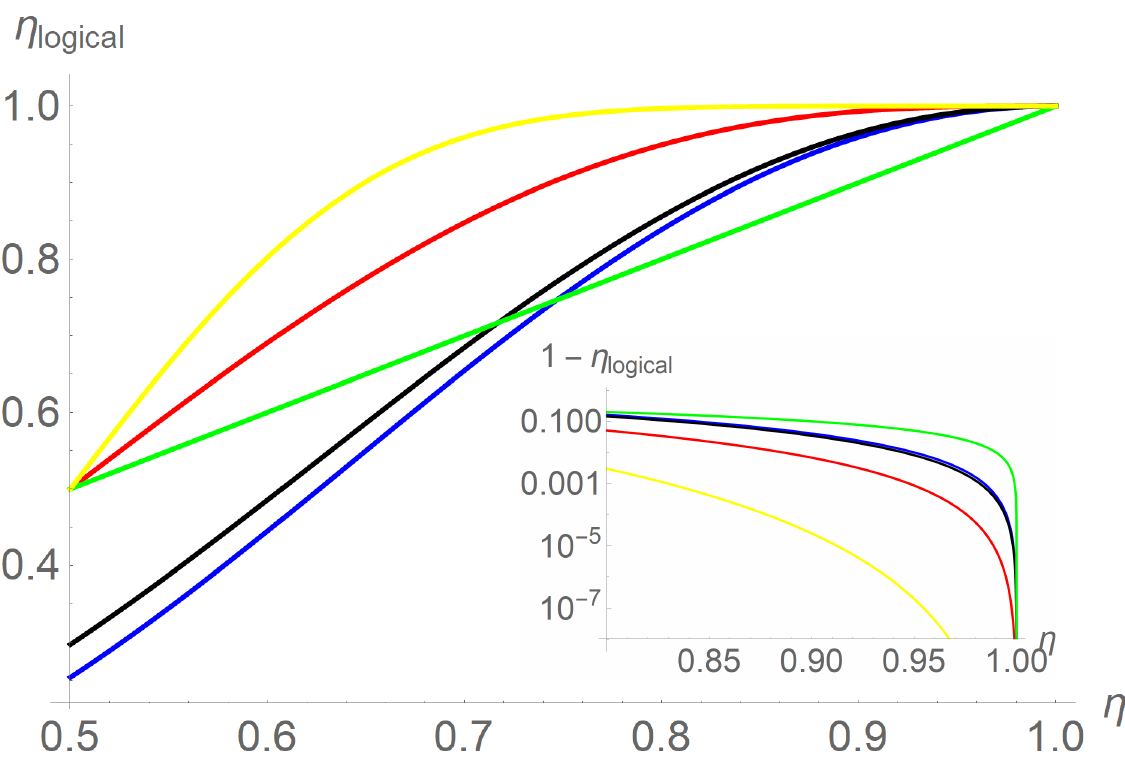}
\caption{Performance of the Steane code (red), surface(3,2) (black), 23-qubit Golay code (yellow), and QPC(4,2) (blue) with unconstrained BMs. Green denotes direct transmission without a code. We also showed the Golay code for a comparison with larger codes.}
\label{fig:comparisonnonlinear}

\end{figure}

\section{Calculations for planar surface code}
\label{sec:calculations}

The planar surface code is a CSS code encoding one logical qubit. When using only $ZZ$=1 BMs, we always get information about $\overline{ZZ}$ and only have to see when we also get information about $\overline{XX}$.
Thus, we can use \autoref{thm:count} and \autoref{lem:enum} to obtain an efficiency of at least $\frac{1}{2}$ and we only have to count the possibilities corresponding to products of $X$-stabilizer generators, such that these fully cover an $\overline{X}$. 
The $X$-stabilizers of the surface code are faces in the dual lattice and $\overline{X}$-operators are strings that connect the two boundaries. 
Thus, one can see that the conditions for a successful logical BM are the same as if one has a board with $(n-1)\cdot m$ squares (each corresponding to an $X$-stabilizer on the dual lattice) where one wants to cross the board (along the side with distance $m$) and one can only walk on squares which are 1 and to adjacent squares which are also 1 (moving diagonally is allowed). Examples which cases are decodable or undecodable can be seen in \autoref{fig:toric_lin_comb_overall} .\newline

\begin{figure}
\subfloat[\label{sfig:toric_lin_comb_allowed_1}]{%
  \includegraphics[width=0.3\linewidth]{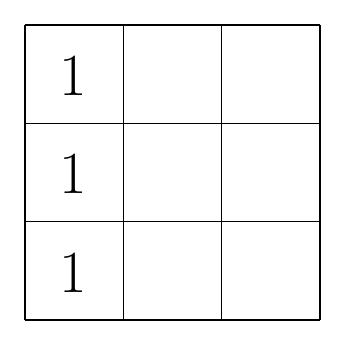}%
}
\subfloat[\label{sfig:toric_lin_comb_allowed_2}]{%
  \includegraphics[width=0.3\linewidth]{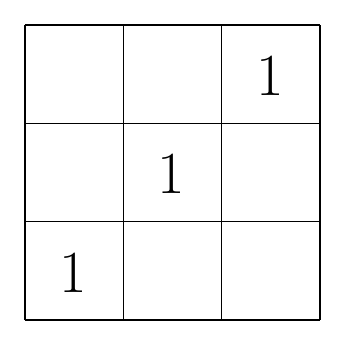}%
}

\subfloat[\label{sfig:toric_lin_comb_allowed_3}]{%
  \includegraphics[width=0.3\linewidth]{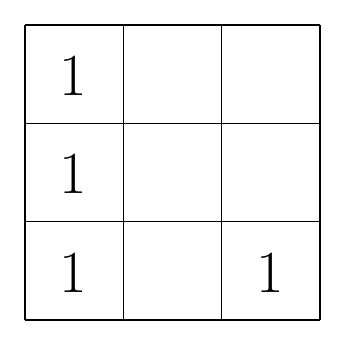}%
}
\subfloat[\label{sfig:toric_lin_comb_allowed_4}]{%
  \includegraphics[width=0.3\linewidth]{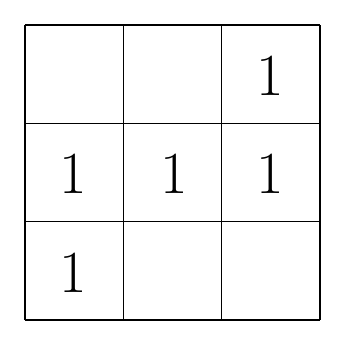}%
}

\subfloat[\label{sfig:toric_lin_comb_forbidden_1}]{%
  \includegraphics[width=0.3\linewidth]{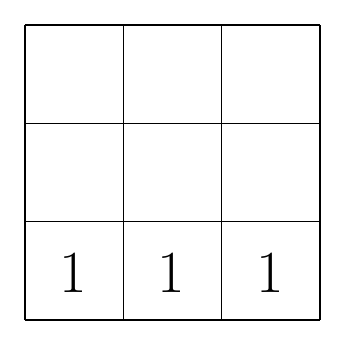}%
}

	\caption{Several example configurations for the combinatorics of the logical BM efficiency for the planar surface code with linear optics. 
	All empty squares have the value 0. In general, there are ($n$-1) columns and $m$ rows. The examples (a)-(d) are allowed, whereas combination (e) is not allowed, i.e. it corresponds to a failed logical BM.}
	\label{fig:toric_lin_comb_overall}	
\end{figure}
 The remaining problem is only of a combinatorial kind getting more and more complicated for increasing code sizes. Therefore, here we discuss only for some special cases how to calculate the efficiency. The surface($n,1$)-code is the same code as the QPC($n,1$)-code and thus the efficiency is known for this case. Now we are looking at cases of arbitrary $m$ and fixed values of $n$ and assume $p_\text{adv}=0$.

\begin{calculation}
The logical BM efficiency for the planar-square-surface(2,m) code using only $ZZ$=1-BMs is given by 
\begin{equation}
\frac{1}{2}\left(1+\left(\frac{1}{2}\right)^m\right)\,.
\end{equation}
\end{calculation}
\begin{proof}
The summand of 1 comes from the fact that every codeword which can be associated with $\overline{X}$ results in a successful logical BM and the summand of $\frac{1}{2^m}$ comes from the fact that we have a board of 1$\cdot m$ squares where all squares have to be 1 in order to be able to connect both sides and there are $2^m$  possibilities.
\end{proof}

\begin{calculation}
The logical BM efficiency for the planar-square-surface(3,$m$) code using only $ZZ$=1-BMs is given by 
\begin{equation}
\frac{1}{2}\left(1+\left(\frac{3}{4}\right)^m\right)\,.
\end{equation}
\end{calculation}
\begin{proof}
The summand 1 comes again from codewords that correspond to $\overline{X}$ and the summand $\left(\frac{3}{4}\right)^m$ comes from the fact that we have $m$ rows of 2 squares per row. The only forbidden combination in each row then is 00 and the proportion  of not having 00 in any row is  $\left(\frac{3}{4}\right)^m$.
\end{proof}

\begin{calculation}\label{thm:toric4m}
The logical BM efficiency for the planar-square-surface(4,m) code using only $ZZ$=1-BMs is given by 
\begin{widetext}
\begin{equation}
\frac{1}{2}+\frac{1}{41\cdot 4^{2m+1}}\left((41-7\sqrt{41})(7-\sqrt{41})^m+(41+7\sqrt{41})(7+\sqrt{41})^m\right)\,.
\end{equation}
\end{widetext}
\begin{proof}
	
Observe that the possibility of only 0s in a row is not allowed and with $n$=4 it is not the case anymore that a square is adjacent to all squares in a previous row. 
Thus, we are splitting the combinations in a row into two classes `z' and `b'. 
Elements of class `z' are not connected to all squares in the previous row, i.e. the two combinations 100 and 001 belong to class `z'. 
Elements of class `b' are connected to all squares in the previous row and there are 5 elements in this class.
Let $z_j$ be the number of possible configurations that end with a fixed element of class `z' (e.g. 100) in the j\textsuperscript{th} row and let $b_j$ be the number of possible configurations that end with a fixed element of class `b' (e.g. 111) in the j\textsuperscript{th} row.\\
In order to calculate $z_j$ and $b_j$ we are setting up a recursive system of equations with the starting condition $z_1=b_1=1$:
\begin{eqnarray}
\label{eq:toric4m}
\begin{pmatrix} z_j \\ b_j \end{pmatrix}&=&\begin{pmatrix} 1 & 5 \\2 & 5\end{pmatrix}   \begin{pmatrix} z_{j-1} \\ b_{j-1} \end{pmatrix}-  \sum_{k=1}^{j-2}  \begin{pmatrix} z_k \\0 \end{pmatrix}\nonumber \\  &\stackrel{j>2}{=}&\begin{pmatrix} 1 & 5 \\2 & 5\end{pmatrix}   \begin{pmatrix} z_{j-1} \\ b_{j-1} \end{pmatrix}-\begin{pmatrix} b_{j-2} \\0 \end{pmatrix}\,.
\end{eqnarray}
The part with the matrix counts the possibilities to connect a row with the previous one, such that there exists at least one square in each row and these two squares are neighbors, and also considering the class of the combination in the previous row. 
This condition is necessary for being able to cross the whole board, but it is not sufficient.
Imagine a piece of 3 rows. Row 1 and 2 are connected on the left side while row 2 and 3 are connected on the right side, and left and right sides are not connected in row 2. 
Thus, the path crossing the whole field is interrupted and all actual correct combinations before row 2 do not help crossing the field (see \autoref{fig:toric_fail_subtraction}).
This gives rise for the subtraction of $z_1$ and one has to sum over all possible places of such an interruption.
 One can also simplify the recursion formula by using the identity \begin{eqnarray}
b_j &=&2z_{j-1}+5b_{j-1} \nonumber \\&=&\underbrace{\left(z_{j-1}+5b_{j-1}-\sum_{k=1}^{j-2}z_k \right)}_{\scalebox{1}{$z_j$}}+z_{j-1}+\sum_{k=1}^{j-2}z_k\nonumber \\ &=&\sum_{k=1}^{j}z_k\,,
\end{eqnarray}which can be obtained easily using the recursion formula.
This simplified  recursion formula was solved by Mathematica and the solution of $z_m$ and $b_m$ was obtained.
The number of all allowed combinations is $2z_m+5b_m$, because there are two elements in class `z'  and five elements in class `b'. Dividing by the number of all possibilities and adding $\frac{1}{2}$ one gets the stated result.

\end{proof}
\end{calculation}
\begin{figure}
	\centering
	\subfloat{
		\includegraphics[width=0.45\linewidth]{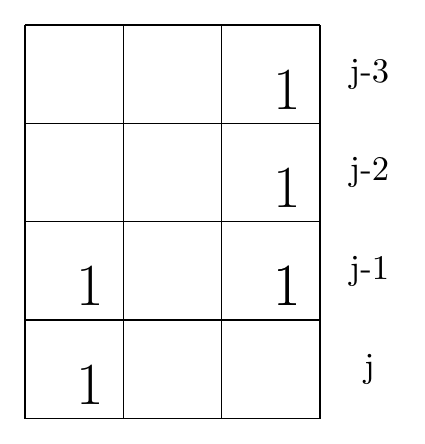}
		\label{fig:toric_4_fail}
		}
	\subfloat{
		\includegraphics[width=0.45\linewidth]{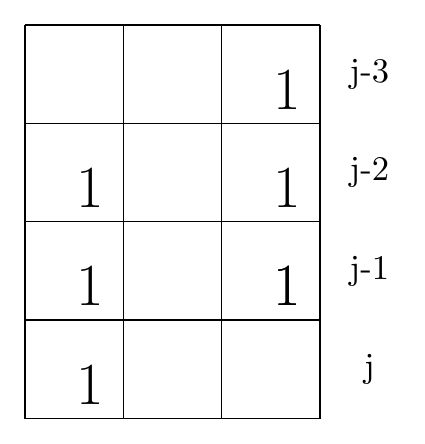}
		\label{fig:toric_4_fail2}
		}
	\caption{Two configurations which give rise to the subtraction in \cref{eq:toric4m}.
		(a) The interruption takes place in the (j-1)\textsuperscript{th} row such that a number of $z_{j-2}$ counted combinations is invalid.
		(b)The interruption takes place in the  (j-2)\textsuperscript{th} row such that a number of $z_{j-3}$ counted combinations is invalid.
		Taking care of all these cases gives the sum in \autoref{eq:toric4m}. }
	\label{fig:toric_fail_subtraction}	
\end{figure}

\begin{calculation}\label{thm:surfacenmefficient}
Using  $ZZ$=1-BMs along the diagonal of the lattice and $XX$=1-BMs elsewhere the efficiency of the planar surface($n,n$)-code using linear optics with $p_{\text{adv}}=0$ and without loss is given by
\begin{equation}
1-2 \cdot 4^{-n}\,.
\end{equation}
\end{calculation}
\begin{proof}
We choose to perform  $ZZ$=1 measurements along the diagonal \footnote{We can also exchange $ZZ$ and $XX$ measurements and the argument remains the same if we look at the dual lattice. } and $XX$=1 measurements elsewhere.
Therefore we know the measurement outcome of $\overline{ZZ}$ and we need only one $ZZ$=1 measurement to give full information in order to obtain $\overline{XX}$ information.
In order to count the fraction of successful identifications of the logical Bell state, we make again use of \autoref{lem:enum}.
 The diagonal is also the support of a $\overline{Z}$-operator and $\overline{Z}$ and $\overline{X}$ anticommute.
 This means that those operators have at least one common physical qubit in their support.
 As a consequence every $\overline{X}$ crosses the diagonal at least once and $Z$-codewords that correspond to such a crossing give us full information such that we obtain the $\overline{XX}$ information.
 This argument is similar to \autoref{thm:count} but now for a combination of $ZZ$=1 and $XX$=1 measurements.  We now have to consider codewords that correspond to products of $X$-stabilizer generators.
  Stabilizer generators that have no qubits of the diagonal in their support do not have an effect for obtaining full information of a BM on a qubit pair on the diagonal and can be ignored. Codewords that correspond to a single $X$-stabilizer generator along the diagonal correspond to obtaining $\overline{XX}$ information.
  Thus, we will check if products of multiple $X$-stabilizer generators along the diagonal can generate an operator whose support does not contain the diagonal.

  We will start with a single $X$-stabilizer generator whose support contains the diagonal and we try to remove  its support on the diagonal by multiplying additional $X$-stabilizer generators.
  Notice that we need to multiply neighboring $X$-stabilizer generators in order to clear the diagonal, but this multiplication adds two new edges of the diagonal to the support of the product operator.
  We could repeat this step again and again in order to push these two edges further away, but since this surface code is defined on a plane with boundaries, we reach this boundary after a finite number of iterations and we have to stop.
  Thus, this operator still has some edges of the diagonal in its support.
  A visualization of these iterations can be seen in   \autoref{fig:visualization_planar_surface_linear_optic_mix}.
  Therefore, every codeword that corresponds to the usage of any non-trivial $X$-stabilizer generator on the diagonal allows a successful determination of the $\overline{XX}$ value.
  The diagonal consists of $2n-1$ edges and each $X$-stabilizer generator along the diagonal has two edges which lie on the diagonal.
  Therefore, our argument considers $2(n-1)$ $X$-stabilizer generators and every combination of them except using no stabilizer generator leads to success.
  As a consequence, the overall logical BM efficiency without photon loss is given by \[\frac{1}{2}\left(1+1-\frac{1}{2^{2(n-1)}}\right)=1-2\cdot 4^{-n}\,.\]\newline 
\end{proof}

\begin{figure}
\subfloat[]{%
  \includegraphics[width=0.25\textwidth]{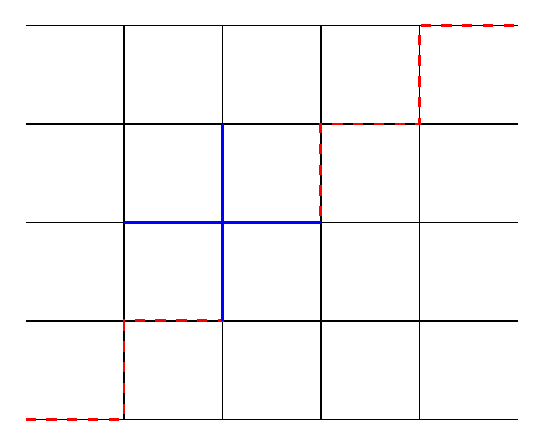}%
}
\subfloat[]{%
  \includegraphics[width=0.25\textwidth]{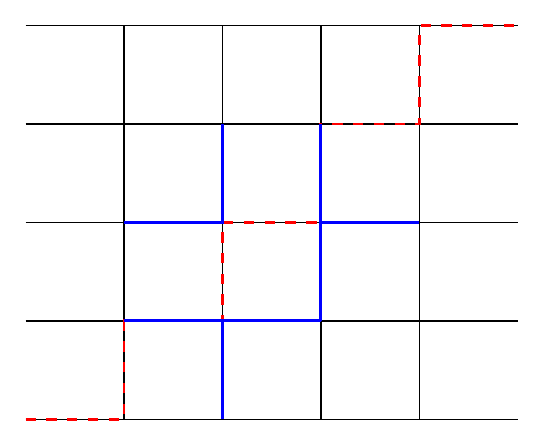}%
}

\subfloat[]{%
  \includegraphics[width=0.25\textwidth]{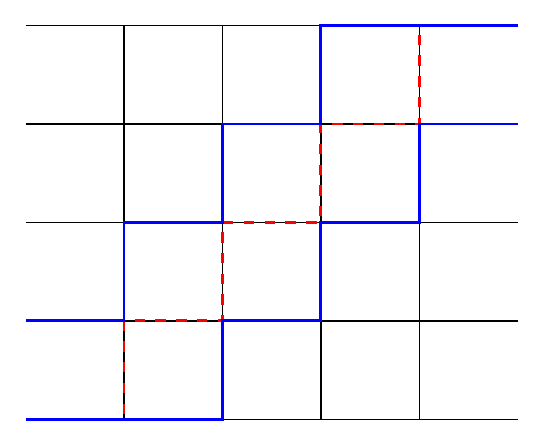}%
}

	\caption{Visualization of codewords that correspond to $X$-stabilizer generators within a surface(5,5) code. The dotted red line is a visual aid and shows the diagonal where we perform $ZZ$=1 measurements. In (a) we can see the support of an $X$-stabilizer generator (marked as blue lines) whose support contains edges of the diagonal. In (b) we can see the support of the operator after multiplying two adjacent $X$ stabilizer generators, resulting in a movement of the edges which lie on the diagonal and belong to the support of the operator. (c) shows that the procedure performed in (b) can only be applied a finite number of times due to the boundary of the plane.}
	\label{fig:visualization_planar_surface_linear_optic_mix}	
\end{figure}

We can also generalize this argument for surface codes with $n\neq m$.
Let us assume $n>m>1$. For $m>n>1$, everything works similarly when using the dual lattice and exchanging the guaranteed-information BMs. The formation of $ZZ$=1 measurements forms a zig-zag, which can be seen in \autoref{fig:toric_mix_34}. We need $m>1$ in order to be able to construct this zig-zag pattern.  The diagonal in the cases of $n=m$ is a special case of this zig-zag.
The actual argument in this calculation is the same as in the previous case with $n=m$ and we only need to count the number of $X$-stabilizer generators whose support includes the zig-zag. $2n-1$ edges lie on this zig-zag and thus we can use $2(n-1)$ $X$-stabilizer generators like in the previous argument. Since we can use a similar argument on the dual lattice if $m>n>$1 and using $XX$=1 measurements on the zig-zag, we obtain a logical BM efficiency in the absence of photon loss of  
\begin{equation*}
	1-2 \cdot 4^{-\max(n,m)} \hspace{0.2cm}.
\end{equation*}

\begin{figure}
	\centering
	\includegraphics[width=0.4\textwidth]{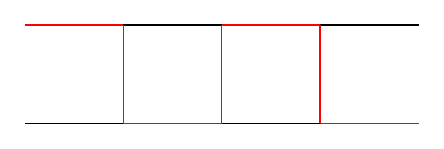} 
	\caption{One example of the formation of physical BMs with different guaranteed-information of a planar surface(n,m)-code, where $n>m>$1. Red edges correspond to guaranteed-$ZZ$=1 measurements and black edges correspond to guaranteed-$XX$=1 measurements, such that $\overline{ZZ}$-information is always given without loss. For the $\overline{XX}$ information only one of the red edges BMs need to give additional information. If $m>n>1$ one goes to the dual-lattice picture and gets the same lattice as in the Figure, but then red edges correspond to guaranteed-$XX$ BMs and black edges to guaranteed-$ZZ$ BMs. The diagonal path of red edges in the case $n=m$ is a special case of the zig-zag.}
	\label{fig:toric_mix_34}
\end{figure}

\section{Proof of \autoref{thm:qpc} }
\label{sec:qpcproof}
\begin{theorem*}
The no-loss BM efficiency of QPC($n,m$) with static linear optics is given by $1-2^{-\left(n+m-1\right)}$ when using the measurement formation as given in \autoref{fig:formations}(e).
\end{theorem*}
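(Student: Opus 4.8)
The plan is to work entirely in the bit picture of \cref{eq:transformation}, where the admissible physical outcomes form the classical codes $C_X$ (the $XX$ pattern) and $C_Z$ (the $ZZ$ pattern), drawn independently and uniformly, and $YY_o = XX_o + ZZ_o$ at every qubit pair $o$. First I would fix the formation of \autoref{fig:formations}(e) explicitly: a single $YY$=1 BM on the unique overlap qubit $(1,1)$ of $\mathrm{supp}(\overline{X})$ and $\mathrm{supp}(\overline{Z})$; $XX$=1 BMs on the remaining block-$1$ qubits $(1,2),\dots,(1,m)$ together with all interior qubits $(l,j)$ with $l\ge 2,\ j\ge 2$; and $ZZ$=1 BMs on the first qubits $(2,1),\dots,(n,1)$ of the other blocks. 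Reading $\overline{Y}=Y_{1,1}\,X_{1,2}\cdots X_{1,m}\,Z_{2,1}\cdots Z_{n,1}$, each factor is exactly the guaranteed information of the BM on that qubit, so $\overline{YY}$ is recovered with certainty. Since Bell-state identification needs only two of the three logical correlators and $\overline{YY}=\overline{XX}+\overline{ZZ}$, the logical BM succeeds \emph{iff} at least one of $\overline{XX}$, $\overline{ZZ}$ can additionally be recovered.

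Next I would turn ``can recover $\overline{ZZ}$'' and ``can recover $\overline{XX}$'' into coverage conditions on logical representatives, as in \autoref{thm:count} and the surface-code diagonal argument. The $Z$-representatives of $\overline{Z}$ carry an odd number of $Z$'s per block, and every block $l\ge 2$ is freely coverable through its $ZZ$=1 qubit $(l,1)$, so $\overline{ZZ}$ is recoverable iff some block-$1$ qubit reveals its $ZZ$ value, i.e. iff $YY_{1,1}=1$ or $XX_{1,j}=1$ for some $j\in\{2,\dots,m\}$. Dually, the $X$-representatives of $\overline{X}$ are unions of an odd number of complete blocks, and (because the interior carries $XX$=1 BMs) block $1$ is fully $XX$-covered iff $YY_{1,1}=1$ while block $l\ge 2$ is fully covered iff $ZZ_{l,1}=1$; hence $\overline{XX}$ is recoverable iff $YY_{1,1}=1$ or $ZZ_{l,1}=1$ for some $l\in\{2,\dots,n\}$. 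Intersecting the two negations, the logical BM \emph{fails} precisely on the event
\begin{equation*}
YY_{1,1}=0,\quad XX_{1,2}=\dots=XX_{1,m}=0,\quad ZZ_{2,1}=\dots=ZZ_{n,1}=0 \,,
\end{equation*}
the simultaneous vanishing of exactly $1+(m-1)+(n-1)=n+m-1$ outcome bits.

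It then remains to show these $n+m-1$ bits are jointly uniform, giving a failure probability $2^{-(n+m-1)}$. Here I would invoke \autoref{lem:enum}: $C_X$ is the span of the supports of the $Z$-stabilizers and $\overline{Z}$, whose restrictions to the block-$1$ coordinates are the vector $e_{1,1}$ together with the nearest-neighbour pairs $e_{1,j}+e_{1,j+1}$, which already span $\mathbb{Z}_2^m$, so $(XX_{1,1},\dots,XX_{1,m})$ is uniform on $\mathbb{Z}_2^m$. Symmetrically $C_Z$ is the span of the supports of the $X$-stabilizers and $\overline{X}$, whose restrictions to the first-qubit coordinates $(l,1)$ are $e_{1,1}$ and the pairs $e_{l,1}+e_{l+1,1}$, so $(ZZ_{1,1},\dots,ZZ_{n,1})$ is uniform on $\mathbb{Z}_2^n$. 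Because the $XX$ and $ZZ$ patterns are independent, the linear map sending the $m+n$ independent fair bits $(XX_{1,1},\dots,XX_{1,m},ZZ_{1,1},\dots,ZZ_{n,1})$ to $(XX_{1,1}+ZZ_{1,1},\,XX_{1,2},\dots,XX_{1,m},\,ZZ_{2,1},\dots,ZZ_{n,1})$ is surjective onto $\mathbb{Z}_2^{n+m-1}$, so its image is uniform and the all-zero event has probability $2^{-(n+m-1)}$.

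The main obstacle, and the step I would spend the most care on, is not the probability count but establishing that the failure event is \emph{exactly} the stated all-zero event: that in the no-loss regime no cleverer choice of logical representative rescues a pattern outside it, and conversely that on it every representative of both $\overline{X}$ and $\overline{Z}$ is genuinely uncovered. This rests on the minimality claim that the cheapest $X$-representatives are single blocks and the cheapest $Z$-representatives are single qubits, which in turn forces the interior qubits to carry $XX$=1 BMs; any other interior assignment enlarges the failure set and destroys the clean $n+m-1$ count. Verifying this representative-minimality uniformly over all blocks, together with the necessity of the $YY$=1 BM (without which QPC$(2,2)$ already stalls at $\frac{3}{4}$), is the genuinely lengthy part, which is why the complete argument is relegated to the appendix.
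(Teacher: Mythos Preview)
Your argument is correct and takes a genuinely different route from the paper's. The paper proceeds by double induction: it verifies QPC$(2,2)$ by explicitly enumerating the $2^5$ codeword combinations of $C_X\times C_Z$, then increases $n$ (at fixed $m=2$) and finally $m$, tracking at each step how the new basis codewords split between those that open a fresh success route and those that are irrelevant, so that the undecodable fraction halves with every increment. You bypass the induction entirely by isolating the failure event in one stroke as the simultaneous vanishing of $n+m-1$ specific outcome bits and then computing its probability via the surjectivity of the relevant coordinate projection of $C_X\times C_Z$. This is more transparent and yields the closed form immediately. One caveat: the formation you write down (interior qubits $(l,j)$ with $l,j\ge 2$ carrying $XX$=1 BMs) differs from \autoref{fig:formations}(e); the paper's inductive proof makes clear that \emph{all} qubits in blocks $l\ge 2$ carry $ZZ$=1 BMs. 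This is harmless, since the $C_Z$ constraint forces $ZZ$ to be constant within each block, so on the all-zero event the interior qubits of blocks $l\ge 2$ yield no additional information under either assignment, and your failure-event characterisation and probability count carry over verbatim. Finally, the step you rightly flag --- that on the all-zero event no clever representative (even one mixing $X$, $Y$, $Z$ via products with $Z$-stabilizers) rescues the logical BM --- is dispatched most cleanly not by enumerating representatives but by exhibiting the single residual free bit $b:=XX_{1,1}=ZZ_{1,1}$ consistent with all observed data and noting that $(\overline{XX},\overline{ZZ})=(b,b)$ takes both values.
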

\begin{proof}
We are going to prove this result for the special case of $n=m=2$ and then we are generalizing it inductively.\newline
Notice that we always obtain $\overline{YY}=(i \overline{X}\overline{Z})(i \overline{X}\overline{Z})=\overline{XX}\overline{ZZ}$ information and we also obtain full logical information if the $YY$=1 measurement gives full information. 
Therefore, we will count the number of codeword combinations that do not allow for an identification of the logical Bell states.
A basis for $X$- and $Z$-codewords is obtained as usually by the support of stabilizer operators and the logical operators and by applying \autoref{lem:enum} to it. 
This time we do not use the obtained basis directly, but we use a linear combination of these basis elements to build a different basis for $X$- and $Z$-codewords, so that enumeration arguments will simplify. 
The $X$-codewords are given by span $(\underset{\circled{1}}{1_{(1,1)}+ 1_{(2,1)}},\underset{\circled{2}}{ 1_{(1,2)}+ 1_{(2,1)}},\underset{\circled{3}}{ 1_{(2,1)}+ 1_{(2,2)}})$ and the $Z$-codewords are given by  
 span $(\underset{\circled{$1^*$}}{1_{(1,1)}+ 1_{(1,2)}}, \underset{\circled{$2^*$}}{1_{(2,1)} +1_{(2,2)}})$. In this notion $1_{(j,k)}$ is a string $\in \mathbb{Z}_2^{n\cdot m}$ where the entry corresponding to the qubit pair in row j and column k is 1  and all other entries of this string are 0.
 We are able to discriminate all logical Bell states if we either get full information in (1,1) or (1,2) or ((2,1) and (2,2)). Codewords that involve either both of $\circled{1}$ and $\circled{$1^*$}$ or none of them give no full information on (1,1). Codewords that involve $\circled{2}$ give full information on (1,2) and codewords that involve $\circled{$2^*$}$ give full information on (2,1) and (2,2). In contrast $\circled{3}$ is independent of any information gain. Thus, there exist 4 out of $2^5$ codewords that do not allow for an identification of the logical Bell state.
 Therefore, this formation achieves a logical BM efficiency of $1-\frac{4}{2^5}=1-2^{-3}=1-2^{-(n+m-1)}$ for $n=m=2$.\newline
 Now suppose this also holds for arbitrary $n\geq$2 while $m$=2.
 Increasing $n\rightarrow n+1$ gives two new stabilizer generators.
  Namely, we get a new $X$-stabilizer generator with support on all qubits of the $n$\textsuperscript{th} and ($n+1$)\textsuperscript{th} row, which corresponds after linear combinations of previous $Z$-codewords to a new basis $Z$-codeword $1_{(n+1,1)}+1_{(n+1,2)}$. 
  Any codeword that involves this basis codeword gives $\overline{XX}$ information along the ($n+1$)\textsuperscript{th} row, which allows an identification of the logical Bell state. 
 We also get a new $Z$-stabilizer generator with support on the ($n+1$)\textsuperscript{th} row. Therefore, the resulting basis $X$- codeword only makes a difference on qubit pairs with $ZZ$=1 measurements and, as a consequence, it has no influence on the ability to perform a successful logical BM. Similarly, we have to extend the codeword corresponding to $\overline{Z}$ to the ($n+1$)\textsuperscript{th} row, which also has no influence on the ability to identify a Bell state with a $ZZ$ measurement.
 This means the number of all codeword combinations increases by a factor of 4 while the number of undecodable codeword combinations only increases by a factor of 2. 
 Therefore, we have shown that the logical BM efficiency is $1-2^{-(n+m-1)}$ for $m$=2, $n\geq$2, and now we fix $n$ and increase $m\rightarrow m+1$.

This gives us $n$ new $Z$-stabilizer generators corresponding to $n$ basis $X$-codewords. Using linear combinations we can obtain a basis $X$-codeword of the form $1_{(1,m+1)}+\dots+1_{(n,m+1)}$.
 Every codeword combination that involves the new basis $X$-codeword  gives $\overline{ZZ}$ information along the ($m+1$)\textsuperscript{th} column, allowing an identification of the logical Bell state. Furthermore, we obtain $n$-1  basis $X$-codewords of the form  $1_{(\text{j},m)}+1_{(\text{j},m+1})$ for  $j\in\{2,...,n\}$. However, these basis $X$-codewords only act on qubits where $ZZ$ measurements are performed and hence they have no influence on being able to identify Bell states. We also have to consider that $\overline{X}$ is extended to the ($m+1$)\textsuperscript{th} column and this  basis $Z$-codeword has an influence on the $ZZ$ measurements in the ($m+1$)\textsuperscript{th} column, but this has no effect on the ability to obtain $\overline{ZZ}$ information, because all $ZZ$ measurement within a row yield the same result.
 Therefore, the number of all codeword combinations increases by a factor of $2^n$ while the number of undecodable codeword combinations only increases by a factor of $2^{n-1}$.\newline As a result we can see that the logical BM efficiency without photon loss is given by $1-2^{-(n+m-1)}$ for  $n,m>1$
\end{proof}

\section{\label{a:nocloning}No-cloning}
\begin{theorem}\label{eq:nocloning}
	The logical transmission probability $\eta_{log}$ of a quantum code is not higher than $\frac{1}{2}$ for $\eta=\frac{1}{2}$ in the erasure channel.
\end{theorem}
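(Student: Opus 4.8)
The plan is to exploit the self-complementarity of the erasure channel at $\eta=\tfrac12$ together with the no-cloning theorem, in the form of monogamy of entanglement. First I would purify the situation: instead of an arbitrary logical input, encode one half of a maximally entangled pair shared with a reference system $R$ and send the $N$ physical qubits through independent erasure channels of transmittance $\eta=\tfrac12$. Successful logical transmission then means that the surviving qubits remain maximally entangled with $R$, i.e. the logical information is recoverable from them; by the decodability criterion recalled earlier, this happens exactly when the erased set $E$ contains no full manifestation of a logical operator, so that both $\overline{X}$ and $\overline{Z}$ can still be measured on the survivors $\bar E$.

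Next I would introduce the environment as a second receiver, ``Bob'': in the erasure model every lost photon is swapped into an environmental mode, so Bob physically holds precisely the erased qubits $E$. Running the same recovery condition for Bob shows that Bob recovers the logical information iff the complementary set $\bar E$ carries no full logical operator. The key structural step is then no-cloning: for any fixed erasure pattern, Alice (holding $\bar E$) and Bob (holding $E$) hold disjoint systems, so $R$ can be maximally entangled with at most one of them. Hence for every pattern at most one of the two recoveries can succeed, and the two success events are disjoint over the channel randomness, giving $\Pr[\text{Alice succeeds}]+\Pr[\text{Bob succeeds}]\le 1$.

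The final step is the symmetry special to $\eta=\tfrac12$. At this transmittance each of the $2^N$ erasure patterns occurs with the same weight $(\tfrac12)^{|E|}(\tfrac12)^{N-|E|}=2^{-N}$, and complementation $E\mapsto\bar E$ is a measure-preserving bijection on patterns. Since Bob's success on pattern $E$ is exactly Alice's success on pattern $\bar E$, this bijection yields $\Pr[\text{Bob succeeds}]=\Pr[\text{Alice succeeds}]=\eta_{\text{log}}$. Substituting into the disjointness inequality gives $2\,\eta_{\text{log}}\le 1$, i.e. $\eta_{\text{log}}\le\tfrac12$, as claimed.

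I would expect the main obstacle to be the rigorous justification of the ``at most one recovers'' step, i.e. turning the heuristic no-cloning statement into a precise claim about the heralded, pattern-dependent recovery maps. The cleanest route is the entanglement-with-reference and monogamy formulation above, which sidesteps having to argue about an arbitrary unknown input directly; one must only be careful that ``recovery'' is defined as restoring the maximal entanglement with $R$ and that the environment genuinely holds the erased qubits, so that Alice's and Bob's subsystems are truly complementary.
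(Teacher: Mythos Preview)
Your argument is correct and rests on the same two pillars as the paper's proof: the self-complementarity of the erasure channel at $\eta=\tfrac12$ and the no-cloning theorem. The difference lies in how you package the symmetry. The paper works directly at the density-operator level: it writes down the Stinespring isometry $U_{\text{iso}}(\eta)$ of the single-qubit erasure channel, observes that at $\eta=\tfrac12$ the joint output of $U_{\text{iso}}^{\otimes N}$ is invariant under swapping Bob and Eve, concludes $\rho_B=\rho_E$, and invokes no-cloning once for the whole state. You instead resolve the channel into its $2^N$ classical erasure patterns, apply the decodability criterion and no-cloning pattern-by-pattern, and then use the measure-preserving involution $E\mapsto\bar E$ on the uniform distribution to equate Alice's and Bob's success probabilities. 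Your route makes the connection to the combinatorial transmission formula \eqref{eq:trans} transparent and shows concretely why the two success events partition the pattern space; the paper's route is shorter, avoids any appeal to the stabilizer-specific decodability criterion, and applies verbatim to arbitrary quantum codes.
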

\begin{proof}
	Recall that a general erasure channel
	\begin{equation*}
		\rho \rightarrow \eta \rho + (1-\eta)\ket{e}\bra{e}
	\end{equation*}
	can also be seen as a unitary evolution in an extended Hilbert space,
	\begin{equation}
		\begin{aligned}
		&U_{iso}:\mathcal{H}_A\rightarrow\mathcal{H}_B\otimes \mathcal{H}_E\\
		U_{iso}(\eta)=&\sqrt{\eta}\left(\ket{0}_B\bra{0}_A+\ket{1}_B\bra{1}_A\right)\otimes \ket{e}_E\\&+\sqrt{1-\eta}\ket{e}_B\otimes\left(\ket{0}_E\bra{0}_A+\ket{1}_E\bra{1}_A\right) \,,
		\end{aligned}
	\end{equation}
	where we may assign subsystem $A,B,E$ to Alice, Bob, Eve (environment), respectively.
	With $\eta=\frac{1}{2}$ the resulting state is symmetric with respect to both parties Bob and Eve and thus both can describe their part of the system with equal reduced density operators.
	Analogously one can show that Bob and Eve have equal density operators when considering a quantum code and applying $U_{iso}^{\otimes n}\left(\frac{1}{2}\right)$.
	Due to the no-cloning theorem it is impossible that Bob as well as Eve can successfully decode the quantum state encoded by Alice.
	Thus the probability of a successful decoding $\eta_{log}$	is not higher than $\frac{1}{2}$.
\end{proof}

\section{Construction of Charlie's probability distribution}
\label{app:probdistri}
Suppose we give Charlie the tuple $\{p_1,p_2,p_3,p_4\}$ containing our desired physical BM efficiencies for the four Bell states.
Calculating Charlie's probability distribution $p_A,\dots,p_F$ is equivalent to finding a solution to the linear system of equations, where $0\leq p_A,\dots,p_F$:
\begin{align}
	\begin{pmatrix}
		p_1\\
		p_2\\
		p_3\\
		p_4\\
		1
	\end{pmatrix}
	=\begin{pmatrix}
		0 & 0 & 0 & 1 & 1 & 1 \\
		1 & 1 & 0 & 1 & 0 & 0 \\
		1 & 0 & 1 & 0 & 1 & 0 \\
		0 & 1 & 1 & 0 & 0 & 1 \\
		1 & 1 & 1 & 1 & 1 & 1
	\end{pmatrix}
	\begin{pmatrix}
		p_A\\
		p_B\\
		p_C\\
		p_D\\
		p_E\\
		p_F
	\end{pmatrix} \,.
\end{align} 
Let us order the four Bell states in such a way that $p_1\leq p_2 \leq p_3 \leq p_4$ and let us now consider the special case of $p_1=0$.
Thus, we only have to consider three equations depending on the input probabilities plus one equation ensuring that the probabilities sum up to one.
By adding the three equations we obtain $p_C=\frac{1}{2}\left(p_4+p_3-p_2\right)\geq \frac{1}{2}p_4\geq 0$ where we used that we ordered the probabilities by magnitude.
By substituting $p_C$ we obtain $p_A=\frac{1}{2}\left(p_3+p_2-p_4\right)=\frac{2}{3}(1-p_4)\geq0$ using $p_1+p_2+p_3+p_4=2$.
Similarly, we obtain $p_B=\frac{1}{2}\left(p_4+p_2-p_3 \right)\geq0$,
$p_A+p_B+p_C=\frac{1}{2}(p_2+p_3+p_4)=1$.\\
Let us now exploit this result in order to consider the general case where $p_1>0$ is possible. 
We will show that it is, for example, possible to decrease $p_2$ and increase $p_1$ while $p_3,p_4$ remain constants. $p_2=p_A+p_B>0$. Let us assume that, for example, $p_2=p_A$, then we can shift probability $p$ from $p_2$ to $p_1$ by decreasing $p_A$ by $p$ and increasing $p_E$ by $p$ such that $p_3,p_4$ do not change. However, when $p_B>0$ it could be that we want to shift a $p>p_A$. Then we first shift with $p_A$ and then we perform the remaining shifting of $p_2$ on $p_B$ (increasing $p_F$). If we also need to shift probabilities from $p_3$ or $p_4$, one can perform similar steps.
Thus, we can summarize Charlie's protocol of finding his probability distribution:
\begin{itemize}
	\item Charlie receives a tuple $\{p_1,p_2,p_3,p_4\}$ describing the state-dependent desired efficiencies of the physical BMs.
	\item Charlie relabels the elements such that $p_1<\dots<p_4$.
	\item Charlie defines auxiliary probabilities $\widetilde{p}_1=0,\widetilde{p}_2\geq p_2,\widetilde{p}_3\geq p_3,\widetilde{p}_4\geq p_4$ and still $\widetilde{p}_2\leq \widetilde{p}_3\leq \widetilde{p}_4$.
	\item Charlie uses the solution for the $\widetilde{p}_1=0$-special case.
	\item Charlie shifts probabilities to $\widetilde{p_1}$ until he obtains the desired distribution  $\{p_1,p_2,p_3,p_4\}$.
\end{itemize} 

\section{Scripts}

All scripts that have been used for obtaining numerical results and plots in this paper can be found at \url{https://github.com/schmidtfrk/CSSlogBM} .

 \normalem
\bibliography{ref}

\end{document}